\newtheorem{thm}{Theorem}
\newtheorem{prop*}{Proposition}
\newtheorem{prop}{Proposition}
\newtheorem{cor}{Corollary}
\newtheorem{thm*}{Theorem}
\newtheorem{lemma}{Lemma}
\theoremstyle{definition}
\newtheorem{defin}{Definition}
\newtheorem{assum}{Assumption}
\newtheorem{remark}{Remark}
\newenvironment{propbis}[1]
  {%
   \addtocounter{thm}{-1}%
   \begin{thm}}
  {\end{thm}}
\newenvironment{proptris}[1]
  {%
   \addtocounter{thm}{-1}%
   \begin{thm}}
  {\end{thm}}
\newenvironment{propquad}[1]
  {%
   \addtocounter{thm}{-1}%
   \begin{thm}}
  {\end{thm}}
\newtheoremstyle{named}{}{}{\itshape}{}{\bfseries}{.}{.5em}{\thmnote{#3}#1}
\theoremstyle{named}
\title{Virus Dynamics with Behavioral Responses}
\author{Krishna Dasaratha\thanks{Boston University. Email: krishnadasaratha@gmail.com. I am grateful to the editor and two anonymous referees, Daron Acemoglu, Glenn Ellison, Drew Fudenberg, Edward Glaeser, Benjamin Golub, Chang Liu, Roman Pancs, Matthew Rabin, Sarah Ridout, Tomasz Strzalecki, Eduard Talam\`{a}s, Flavio Toxvaerd, Iv\'{a}n Werning, and Alex Wolitzky for helpful comments and especially to Florian Herold for an insightful discussion. Declarations of interest: none.}}
\date{\today}
\begin{document}

\maketitle
\begin{spacing}{1.5}
\begin{center}\textbf{Abstract}
\end{center}
Motivated by epidemics such as COVID-19, we study the spread of a contagious disease when behavior responds to the disease's prevalence. We extend the SIR epidemiological model to include endogenous meeting rates. Individuals benefit from economic activity, but activity involves interactions with potentially infected individuals. The main focus is a theoretical analysis of contagion dynamics and behavioral responses to changes in risk. We obtain a simple condition for when public-health interventions or variants of a disease will have paradoxical effects on infection rates due to risk compensation. Behavioral responses are most likely to undermine public-health interventions near the peak of severe diseases.

\pagenumbering{gobble}
\newpage
\pagenumbering{arabic}
\setcounter{page}{1}

\section{Introduction}

In a severe epidemic such as COVID-19, the spread of a contagious disease depends substantially on behavioral and policy responses. Standard epidemiological models often predict disease paths with and without social distancing, and outcomes can differ dramatically across these scenarios.\footnote{To give a high-profile example, in March 2020 \cite{ferguson2020report} predicted 2.2 million US deaths from COVID-19 without increased social distancing. Their predicted death toll decreased to 1.1-1.2 million in a mitigation scenario, and there were further reductions under a more aggressive suppression scenario. Beyond the differences in death rates, infection dynamics look very different across the scenarios.} Given the large impact of responses such as social distancing on infection dynamics, understanding how much people will interact is an important open question \citep{funk2015nine}.

To analyze the spread of a disease and the accompanying behavioral responses, we adapt the widely-used SIR (Susceptible/Infected/Recovered) epidemiological model to allow individuals to choose activity levels at each point in time.\footnote{Our model can be modified to include a probability of death rather than recovery; this does not change disease dynamics.} Higher activity levels provide economic and social benefits, but also lead to more interactions with potentially infected individuals, especially when the disease prevalence is high and when others are more active. We use this model to ask how changes in the epidemiological environment, such as public health interventions or new variants, affect the infection rate.

When there is substantial social distancing, the influence of public-health interventions or changes in the virus on behavior is an important consideration. For example, interventions that make interactions safer can be undermined if people respond by interacting more. Related effects have been documented in other settings:   automobile accidents \citep{peltzman1975effects}  and endemic HIV/AIDS at steady state \citep{kremer1996integrating}. But infectious disease outbreaks can change rapidly, and little is known about \textit{when} in an outbreak such effects are likely. Our main results characterize when behavioral responses are largest and most important, and especially when qualitative predictions of the standard SIR model are reversed.

While parts of the analysis are guided by epidemiological evidence on COVID-19, the basic intuitions apply more broadly to other epidemics involving substantial social distancing or related protective behaviors. The key property needed for the intuitions to apply is that the disease prevalence changes rapidly. This need not be the case for endemic diseases long after their discovery; as COVID-19 has demonstrated, however, there can be large and important fluctuations in behavior and incentives earlier in an epidemic.
 

The baseline model considers a population of optimizing agents reacting myopically to current circumstances.\footnote{We begin with myopic agents to obtain sharper results and to avoid assumptions of perfect knowledge of the future path of the disease, and then extend the model to allow forward-looking agents.}  Our main interest is in understanding when in an outbreak behavioral responses are strong enough to reverse direct epidemiological effects. We find that an epidemic can be in a state where three counterintuitive comparative statics hold simultaneously:
\begin{itemize}
  \setlength\itemsep{0em}
\item Increasing the disease prevalence decreases the infection rate.
\item Increasing the transmission rate decreases the infection rate.
\item Decreasing the cost of infection increases the flow costs from new infections.
\end{itemize}
Indeed, any of these holds if and only if the others do. We define a \textit{high infection risk} condition that characterizes the disease states where these comparative statics occur. To illustrate the final bullet point, if the high infection risk condition holds, then a partially effective treatment  may actually increase the health costs of infections because interactions will increase (in the absence of social distancing requirements). We can decompose each comparative static into a term capturing the direct effect, which is the same as in the standard SIR model, and a behavioral response term. When the risk of infection is large enough, the behavioral response can outweigh the direct effect.

The high infection risk condition implies an intuitive description of whether and when in the course of a disease behavioral responses are likely to be large. The counterintuitive effects described above require a severe disease and a high prevalence. With a large infection cost and a high prevalence, there is substantial social distancing and so behavioral responses that weaken social distancing are most consequential. As a result, near the peak of a severe outbreak, public-health interventions can be less effective or even counterproductive unless paired with attempts to maintain social distancing. When prevalence is lower, changes that decrease transmission rates are quite effective. In a simple calibration exercise based on COVID-19, we find a high infection risk region with a higher baseline reproductive number of $R_0=7$ but not with a lower value of $R_0=2.5$, suggesting that the direction of behavioral effects varies within a plausible range of epidemiological parameters.\footnote{These values are obtained from \cite{burki2021omicron} and are estimates of  the reproductive number for the original strain and delta variants of COVID-19, respectively.} 

The model remains analytically tractable and new phenomena emerge with heterogeneous populations, such as people facing different costs of infection due to age or health risks. The framework can also accommodate biased matching between types, which can capture underlying social structure or geography (as in \citealp*{birge2022controlling}). We give a formula expressing equilibrium activity levels in terms of the current population and model parameters.  With heterogeneity, high-risk individuals may temporarily avoid any activity because the infection risk from low-risk people is too high. Upon leaving the economy, high-risk individuals are reluctant to reenter and will wait until the disease prevalence is lower than when they stopped their activity. The techniques used to incorporate preference heterogeneity also allow modifications to the disease mechanics, such as imperfect immunity for recovered individuals.

We also extend our basic results on risk compensation to allow more general incentive structures. First, we allow strategic spillovers in social and economic activity:  activity may become more or less desirable when others engage in more social distancing. Strategic complementarities amplify behavioral responses while strategic substitutes weaken behavioral responses. Second, we show that a quadratic functional form assumption in our baseline model can be relaxed. In general, risk compensation behavior depends on the second derivative of the utility of economic activity.  Third, we allow for forward-looking agents who consider future social distancing and future infection risk when evaluating the value of avoiding infection today.

\subsection{Related Literature}

Much of the early work on behavioral epidemiology focuses on diseases different from COVID-19. Several models of HIV/AIDS epidemics allow individuals to choose numbers of sexual partners  (see \citealp*{kremer1996integrating}, \citealp*{auld2003choices}, \citealp*{greenwood2019equilibrium}, and others). This literature treats HIV/AIDS  as an endemic disease at steady state. By contrast, infection dynamics are important for outbreaks such as COVID-19, and can lead to dramatic shifts in behavior.\footnote{Another difference is that HIV/AIDS modeling uses the SI (Susceptible/Infected) rather than SIR model, as infection is an absorbing state. Theoretical analysis of diseases conferring immunity requires a higher-dimensional state space.} A second strand of the literature focuses individuals' binary choices, such as vaccination, rather than more continuous choices of how much to interact with others (e.g., \citealp*{geoffard1997disease}, \citealp*{galvani2007long}, and several chapters of \citealp*{manfredi2013modeling}). This is especially relevant for less severe or less contagious diseases, such as influenza in ordinary years, for which vaccination choices are more relevant than social distancing.

A number of theoretical papers have studied endogenous social distancing over the course of an epidemic. Early theoretical models include \cite{fenichel2011adaptive},  \cite{chen2012mathematical}, and \cite*{fenichel2013economic}, while more recent work includes \cite*{toxvaerd2019rational}, \cite*{toxvaerd2020equilibrium}, \cite*{mcadams2020nash}, \cite*{mcadams2023equilibrium}, \cite*{carnehl2022time}, and \cite*{engle2021behavioral}. We emphasize three main differences. First, our focus is on how behavioral responses vary with risk levels, and especially on responses to policy changes or shocks that make interactions safer.\footnote{\cite*{toxvaerd2019rational} provides a numerical example where risk compensation can reduce welfare with decentralized behavior and shows that reducing infection risk cannot be harmful if a social planner sets behavior.} We aim for as complete a characterization as possible of the relevant derivatives. In contrast, past work focuses on the basic structure of incentives and externalities. Second, we allow heterogeneity in epidemiological characteristics such as disease risk and in economic preferences, while most theoretical models of social distancing only consider the homogeneous case. Third, we allow strategic complements or substitutes in interaction (as in \citealp*{mcadams2020nash} and \citealp*{mcadams2023equilibrium}), and find an intuitive relationship between these economic feedback effects and epidemiological outcomes.

The main results on risk compensation are closest to several recent papers. \cite*{carnehl2023epidemics} ask how changing the cost of distancing or transmissibility throughout a pandemic affects the peak and total number of infections. To obtain these global comparative statics, their model restricts to distancing by only susceptible agents; the present work focuses on local comparative statics, which allow both temporary and permanent changes in parameters, in a model with richer strategic interactions. \cite*{vellodi2021targeting} also study risk compensation choices in a more abstract model with a one-time choice of how much to interact.\footnote{A similar approach is used by \cite*{acemoglu2016network} to model security in computer networks and \cite*{acemoglu2023testing} to study COVID-19 testing. Both assume agents choose a fixed level of protection and then a virus spreads over time.} Taking local comparative statics in a dynamic model lets us focus on when in an outbreak behavioral responses are most important.

In response to COVID-19, a number of economists have incorporated endogenous behavior into calibrated numerical analyses of SIR models. These models, such as \cite*{farboodi2021internal}, \cite*{eichenbaum2021macroeconomics}, \cite*{krueger2020macroeconomic}, and \cite*{korinek2020covid}, focus largely on macroeconomic consequences of epidemics, and therefore include additional features such as  production sectors or tradeoffs between labor and leisure. While we also include simulations, our focus is on the theoretical properties of the SIR model with endogenous behavior. A major takeaway from our analysis is that basic qualitative properties of disease dynamics and impulse responses can be reversed by changes in model parameters. A simple calibration exercise shows that these reversals can occur within a range of parameters seen as plausible in epidemiological literature on COVID-19.

\section{Model}

\subsection{Setup}

We consider a model in continuous time, indexed by $t \in [0,\infty)$, with a unit mass of individuals. Individuals can be susceptible, infected, or recovered. We denote the shares of susceptible, infected, and recovered individuals at time $t$ by $S(t), I(t)$, and $R(t)$, respectively. We will refer to the shares $(S(t),I(t),R(t))$ as the population disease state.

Each individual chooses a level of activity $q(t) \geq 0$ to maximize their expected flow payoffs at time $t$. The flow payoffs from activity are $$q(t) -a q(t)^2,$$
where $a > 0$. This captures economic and social value from actions that can involve meeting other people, such as going to work or school. The marginal value of these actions is decreasing, and absent infection individuals will choose the finite level of activity $\overline{q} = \frac{1}{2a}$. A susceptible individual pays instantaneous cost $C\geq 0$ if infected.\footnote{We assume infection costs are constant over time. For an analysis of time-varying infection costs, see \cite*{carnehl2022time}.} Quadratic flow payoffs are not essential, and we extend our analysis to more general flow payoffs and forward-looking agents in Section~\ref{sec:gen}.

We now describe the infection process.  When levels of activity are $(q_S(t),q_I(t),q_R(t))$, an individual choosing level of activity rate $q_i(t)$ meets susceptible individuals at Poisson rate $q_i(t)q_S(t)S(t)$, infected individuals at rate $q_i(t)q_I(t)I(t)$, and recovered individuals at rate $q_i(t)q_R(t)R(t)$.\footnote{Since individuals have a unique optimal action $q_i(t)$ for each $t$, we can assume without loss of generality that individuals with the same infection status choose the same level of activity.} That is, individuals meet according to \textit{quadratic matching} \citep*{diamond1979equilibrium}. All meetings are independent.

When a susceptible individual meets an infected individual, the susceptible individual becomes infected with transmission probability $\beta>0$. Infected individuals recover at Poisson rate $\kappa$, and then are immune in all future periods.

When $C=0$, all individuals choose $q(t)=\overline{q}$ and the model reduces to the standard SIR model \citep*{kermack1927contribution}. For $C>0$, the meeting rates between individuals will depend on choices of $q(t).$

\subsection{Equilibrium}

A sequence of (symmetric) action profiles is given by $(q_S(t), q_I(t),q_R(t))_{t=0}^{\infty}$.

Recovered individuals know that they are recovered. Susceptible and infected individuals know that they have not recovered but not whether they are infected, and therefore believe  they are susceptible with probability $\frac{S(t)}{S(t)+I(t)}$ and infected with probability $\frac{I(t)}{S(t)+I(t)}$ at time $t$. Our interpretation is that infected individuals are pre-symptomatic but can spread the disease.\footnote{The model can be extended to allow a fraction of infected individuals who do not develop symptoms, and therefore do not know that when they recover. Introducing asymptomatic infections changes individuals' beliefs about their probability of being susceptible.} (We will discuss a model with equivalent dynamics which explicitly includes symptomatic individuals in Remark~\ref{rem:symptoms}.)

These beliefs imply that susceptible and infected individuals know the sizes of the populations of susceptible and infected individuals but not their current infection status. For example, the sizes of these populations could be calculated from surveillance testing or other publicly available public health data. In our calibrations as well as COVID-19 data prior to the introduction of vaccines, the probability such an individual is susceptible $\frac{S(t)}{S(t)+I(t)}$ is close to one and so behavior would not be very sensitive to some misspecification about the size of the infected population.

A property of the model worth mentioning is that agents pay the infection cost $C$ before becoming aware of the infection. The instantaneous infection cost should be interpreted as a useful modeling technique rather than a literal description of the timing of costs, as delayed costs of infection would not induce social distancing with myopic agents. One could allow infection costs later in the course of the disease in the extension with forward-looking agents (Section~\ref{sec:fwa}).

Susceptible and infected individuals face the same decision problem. So, it will be without loss of generality to consider sequences of actions such that $q_S(t)=q_I(t)$ for all $t$, and any such sequence is characterized by $(q_S(t),q_R(t))_{t=0}^{\infty}$.

An equilibrium is a sequence of action profiles and population disease states with individuals choosing activity levels to maximize flow payoffs at all times, given the population disease state and others' actions, and the disease states following the dynamics described above:
\begin{defin}
An \textbf{equilibrium} given initial conditions $(S(0),I(0),R(0))$ is a sequence of action profiles and shares $(q_S(t), q_R(t), S(t), I(t), R(t))_{t=0}^{\infty}$ such that for all $t \geq 0$
\begin{equation}\label{eq:equilibriumS}
q_S(t)=\text{argmax}_q \left\{q -a q^2 - C q q_S(t) \beta \cdot \frac{I(t)S(t)}{S(t)+I(t)}\right\},
\end{equation}
\begin{equation}\label{eq:equilibriumR}
q_R(t)=\text{argmax}_q\left\{ q -a q	^2 \right\},
\end{equation}
\begin{equation}\label{eq:dynamicsS}\dot{S}(t)= - q_S(t)^2 \beta S(t)I(t),\end{equation}
\begin{equation} \label{eq:dynamicsI}\dot{I}(t)= q_S(t)^2 \beta S(t)I(t) - \kappa I(t),\end{equation}
\begin{equation}\label{eq:dynamicsR} \dot{R}(t)=\kappa I(t).\end{equation}
\end{defin}

We now explain the infection costs in the definition. An individual $i$ choosing level of activity $q_i(t)$ at time $t$ meets infected individuals at rate $q_i(t)q_S(t) I(t)$, and meetings with susceptible and recovered individuals are not relevant for $i$'s payoffs. If individual $i$ is not yet recovered, then $i$ believes they are susceptible with probability $\frac{S(t)}{S(t)+I(t)}$ and already infected otherwise. Since the transmission rate is $\beta$, the susceptible individual would be infected by a meeting with an infected individual at time $t$ with probability $\beta \cdot \frac{S(t)}{S(t)+I(t)}$. The expected rate of infection is therefore $q_i(t) q_S(t) \beta \cdot \frac{I(t)S(t)}{S(t)+I(t)}$, as in equation (\ref{eq:equilibriumS}). If the individual is recovered, there are no infection costs, which gives equation (\ref{eq:equilibriumR}).

\begin{remark}\label{rem:symptoms}
Under the quadratic matching technology, the behavior of recovered individuals does not affect equilibrium behavior by susceptible individuals or the infection rate. Thus, our model is equivalent to a number of extensions with additional heterogeneity among individuals who cannot infect others or be infected.\footnote{These modifications of course have welfare consequences, but behavior and infection dynamics are equivalent.} As one example, infected individuals could be replaced by presymptomatic and symptomatic individuals, where symptomatic individuals choose $q(t)=0$ to avoid spreading the disease.\footnote{Individuals who are not yet symptomatic do not know whether they are susceptible or presymptomatic, and form beliefs as in the baseline model. Presymptomatic individuals become symptomatic at Poisson rate $\kappa$ and symptomatic individuals recover at some Poisson rate. At each time $t$, the infected population in the baseline model matches the presymptomatic population in the modified model and the recovered population in the baseline model matches the combined symptomatic and recovered populations in the modified model.} A second example is including a death rate for infected individuals, so that there are two absorbing states.
\end{remark}

\subsection{Basic Dynamics}\label{sec:dynamics}

We next describe the basic dynamics of the disease. Consider an initial population with $I(0)>0$ and $R(0)=0$. We will be most interested in the dynamics of $(S(t), I(t), R(t))$ starting from a small initial prevalence $I(0)$. We will also often consider the infection rate $\iota(t),$ which is equal to $-\dot{S}(t)$.

Recovered individuals choose $q_R(t) = \overline{q}$ for all $t$, where $\overline{q}=\frac{1}{2a}$. Therefore, an equilibrium is characterized by the level of activity $q_S(t) \leq \overline{q}$ for susceptible and infected individuals, which we will refer to as $q(t)$ from now on.

An individual $i$ who is susceptible or infected chooses $q_i(t)$ to maximize
$$q_i(t) -a q_i(t)^2 - C \beta q_i(t) q(t)  \cdot \frac{I(t)S(t)}{S(t)+I(t)}.$$
Taking the first-order condition for $q_i(t)$ and substituting $q_i(t)=q(t)$, we obtain 
\begin{equation}\label{eq:activity}q(t)=\frac{1}{2a+C\beta \cdot  \frac{I(t)S(t)}{S(t)+I(t)}}\end{equation}
The first-order condition implies there exists a unique symmetric equilibrium given any initial conditions, and the subsequent results characterize the dynamics under this equilibrium.

Equation \ref{eq:activity} implies that the level of activity $q(t)$ is decreasing in the susceptible population $S(t)$ and the prevalence $I(t)$ (drawing the additional susceptible or infected individuals from the recovered population). If the infected population is larger, then activity is more dangerous because individual $i$ is more likely to meet infected peers. If the susceptible population is larger, then individual $i$ is less likely to already be infected.

The baseline reproductive number $R_0=\frac{\beta \overline{q}^2 }{\kappa}$ is the expected number of others who an infected individual would infect in the absence of behavioral responses or immunity. When $R_0$ is less than one, the prevalence $I(t)$ is monotonically decreasing for all $t$ given any initial prevalence and any cost $C \geq 0$. In this case, a small number of cases cannot lead to a larger outbreak. The remainder of the paper analyzes the case in which the transmission probability is high enough for the prevalence to increase (given a small enough initial infected population). Formally, we assume:

\begin{assum}[High enough transmissivity]\label{assumption} $R_0=\frac{\beta\overline{q}^2  }{ \kappa}>1$.
\end{assum}

We describe several basic properties of the model in Appendix~\ref{sec:basic}. In the leading case, disease prevalence is single-peaked under decentralized behavior: the prevalence initially increases, but then peaks and falls toward zero as behavioral responses and herd immunity slow the spread. Part of the population will eventually be immune while others will never be infected, and the relative sizes of these two groups depend on model parameters and potential policy interventions.


\section{Risk Compensation}\label{sec:risk}

We next ask when changes in infection risk have counterintuitive effects due to large changes in behavior. This section shows these counterintuitive effects arise precisely when a high-infection risk condition holds. We begin by defining this condition:

\begin{defin}\label{def:hir}
The \textbf{high infection risk} condition holds at time $t$ if \begin{equation}\label{eq:hir}C \beta \cdot \frac{S(t)I(t)}{I(t)+S(t)}>2 a.\end{equation}
\end{defin}
The left-hand side of the high infection risk condition measures the effect of the behavioral response to a change in infection risk while the right-hand side measures the direct effect of the change. We will see that behavioral responses to policies are most important when there is high infection risk. Expression (\ref{eq:hir}) requires a high cost of infection and large populations of susceptible and infected individuals, so the condition will only hold for severe and highly infectious diseases. Section~\ref{sec:cal} gives parameter values such that the high infection risk condition holds near the peak of the epidemic as well as parameter values for which the condition never holds.

\subsection{Increased Prevalence}

The first consequence of high infection risk is that infection rates are decreasing in disease prevalence, holding the recovered population fixed.

Let $(S^x(t),I^x(t),R^x(t)) = (S(t)-x,I(t)+x,R(t))$ and let $\iota^x(t)$ be the corresponding infection rate. The parameter $x$ can be interpreted as capturing exogenous differences in the population disease state due to shocks such as travel or variation in initial conditions when the disease is discovered. Our first result describes how increasing disease prevalence by increasing $x$ changes the infection rate.

\begin{thm}\label{thm:prevalenceIR}
Suppose there are more susceptible individuals than infected individuals ($I(t)< S(t)$) at time $t$. Then the infection rate is decreasing in the disease prevalence, i.e., $$\left. \frac{\partial \iota^x(t)}{\partial x}\right|_{x=0}<0,$$
if and only if the high infection risk condition holds at time $t$.
\end{thm}

During severe outbreaks, a larger number of individuals infected near the peak can lead to more social distancing. The theorem says that this is large enough to decrease infection rate precisely when the high infection rate condition holds.

The proof shows that $$\left. \frac{\partial \iota^x(t)}{\partial x}\right|_{x=0}= \frac{S(t)-I(t)}{(2a+C \beta\frac{S(t)I(t)}{I(t)+S(t)})^2} \cdot \left(\underbrace{2a}_{\text{direct effect}} -\underbrace{C \beta \cdot \frac{S(t)I(t)}{S(t)+I(t)}}_{\text{behavioral effect}}\right).$$
Increasing the disease prevalence has two effects. First, the positive term corresponds to the direct effect of the change in the infection rate holding behavior fixed, which is positive whenever $I(t) < S(t)$.\footnote{When $I(t)> S(t)$, instead $\left. \frac{\partial \iota^x(t)}{\partial x}\right|_{x=0}>0$ if and only if there is high infection risk at time $t$.} But second, the negative term corresponds to the effects of the change in behavior in response to the increased infection. A higher prevalence leads to a lower level of activity $q(t)$ (when $I(t) < S(t)$), which lowers the infection rate. Condition~(\ref{eq:hir}) determines whether the behavioral effect is larger than the direct effect. By contrast, the standard SIR model only captures the direct effect.

A similar effect occurs in the steady state model of \cite{quercioli2006contagious}, where individuals meet at an exogenous rate but can take protective actions. In a model where disease prevalence changes over time, we find this comparative static coincides with other risk compensation effects: we next show the  high infection risk condition also determines whether behavioral responses outweigh the direct effects of several policy changes.

\subsection{Severity and Transmission Rate}

We next consider the effects of changes in the cost $C$ from infection and the transmission rate $\beta$, which can correspond to public-health measures that change the spread of the disease or the arrival of a variant with higher transmissivity or lower severity. For example, a more contagious variant would correspond to an increase in $\beta$ while a partially effective treatment would decrease $C$.

Our high infection risk condition again determines the sign of the impact of these policies:

\begin{thm}\label{thm:policy}
The following are equivalent:
\begin{enumerate}[(1)]
\item The high infection risk condition holds at time $t$,
\item Marginally decreasing the transmission rate $\beta$ at time $t$  increases the infection rate $\iota(t)$, and
\item Marginally decreasing the infection cost $C$ at time $t$ increases the flow costs $C\iota(t)$ from infections.
\end{enumerate}
\end{thm}

The theorem says that these policies to decrease $\beta$ and $C$ increase short-term flow costs from infection  in the same high-infection risk region, which is characterized by high prevalence and a high cost of infection. Formally, (2) and (3) state the inequalities $$\frac{\partial \iota(t)}{\partial \beta} <0\text{ and }  \frac{\partial (C\iota(t))}{\partial C}<0 $$
at the initial values of $\beta$ and $C$. As in the proof of Theorem \ref{thm:prevalenceIR}, we can determine the signs of these derivatives by decomposing the effects of changes in $\beta$ or $C$ into a direct effect and a behavioral effect. The high infection risk condition corresponds to the behavioral effect being larger.

The model suggests that considering behavioral responses to public health policies is most important for severe diseases and close to the peak of outbreaks. Near the peak, public health interventions will tend to be less effective or even counterproductive unless paired with campaigns or restrictions to encourage social distancing. When prevalence is low, public health measures such as requiring masks are effective alone.

Related steady-state and one-shot models find that partially effective vaccines can increase infection rates. This was first observed by \citep*{kremer1996integrating} for HIV/AIDS. \cite*{toxvaerd2019rational} and \cite*{talamas2020free} show partially effective vaccines can also decrease welfare due to the negative externalities from infection. We provide a general framework that determines when in an outbreak several counterintuitive comparative statics, including the effect from \cite*{kremer1996integrating}, will occur for contagious diseases such as COVID-19. We find that partially effective vaccines will increase infection rates precisely when a lower disease prevalence increases infection rates (which cannot occur at steady state) and when partially effective treatments increase flow costs from infections.

\subsection{Calibration}\label{sec:cal}

We now perform a simple calibration exercise to give intuition for when there will be a high infection risk. The purpose of the exercise is to explore how the qualitative properties of behavioral responses depend on model parameters. We find there are qualitative differences within reasonable parameter ranges: high infection risk is unlikely under estimates of $R_0$ for the original strain of COVID-19 but does hold near peak prevalence in simulations with an estimated $R_0$ for the delta variant.

The model is specified (up to normalizations) by the basic reproductive number $R_0=\frac{\beta \overline{q}^2 }{\kappa}$ before immunity or behavioral responses, the average disease length $\frac{1}{\kappa}$, and a preference parameter $\frac{C}{4a}$.  We briefly discuss each of these parameters and the values used in our simulations:

\begin{enumerate}[(1)]
\item The parameter $R_0$ is purely epidemiological. We describe simulations under values of $R_0^{low} = 2.5$ and $R_0^{high}=7.0$, which are based on parameter values for the original strain and delta variants of COVID-19 from \cite{burki2021omicron}.\footnote{If these values are based in part or fully on models that do not account for social distancing they may be biased downward. Higher values of $R_0$ would strengthen behavioral responses in our simulations.} Simulations suggest behavioral responses are more sensitive to $R_0$ than the other two parameters. The simulation results below can be interpreted as giving a sense of how large the reproductive number $R_0$ needs to be for the high infection risk condition to hold for infections severe enough to induce substantial social distancing.

\item The parameter $\kappa$ is also purely epidemiological. We set the average disease length $\frac{1}{\kappa}$ to be $14$ days, which is approximately twice the estimated serial interval between infections \citep{sanche2020high}. Some work has estimated substantially shorter infectious periods, e.g., 3.5 days in \cite{li2020substantial}. A shorter average disease length would give stronger behavioral responses in our calibration.

\item  The preference parameter $\frac{C}{4a}$ is the sole economic parameter in the calibration, and measures the cost of infection divided by the consumption utility from one period of normal activity. To see this, recall that without social distancing $\overline{q} = \frac{1}{2a}$ and the corresponding flow payoffs are $\overline{q}-a\overline{q}^2=\frac{1}{4a}.$

We assume the  cost of infection is equal to the flow payoffs consumption utility from six months of normal activity. This is comparable to existing parameter values in macroeconomic models of COVID-19, many of which assume an infection cost ranging from three to twelve months of the value of US per capita consumption. These parameters are often derived from estimates of the value of a statistical life and the infection-fatality ratio for COVID-19 (see \citealp*{farboodi2021internal} for a discussion). The results we now discuss also remain qualitatively unchanged for other values between three and twelve months.
\end{enumerate}

\begin{figure}
\center{\includegraphics[scale=.7]{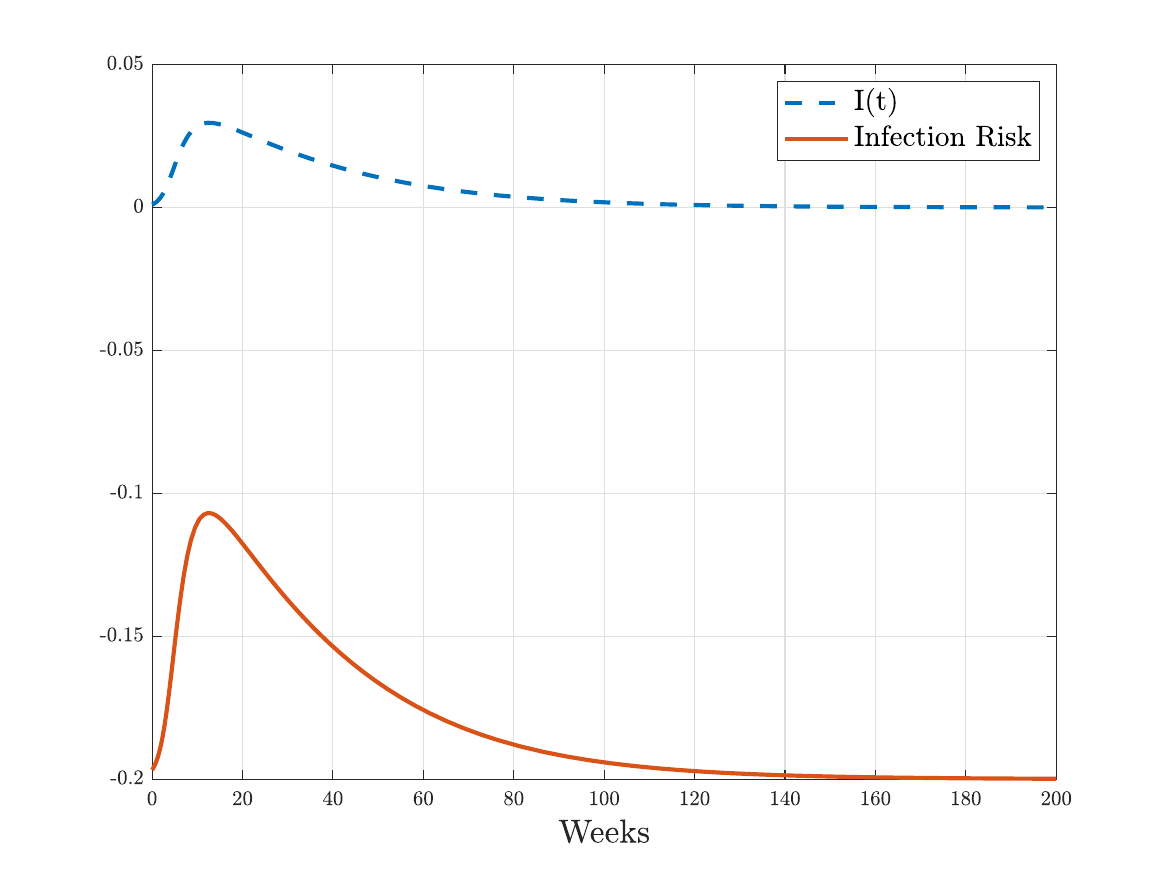}

\includegraphics[scale=.7]{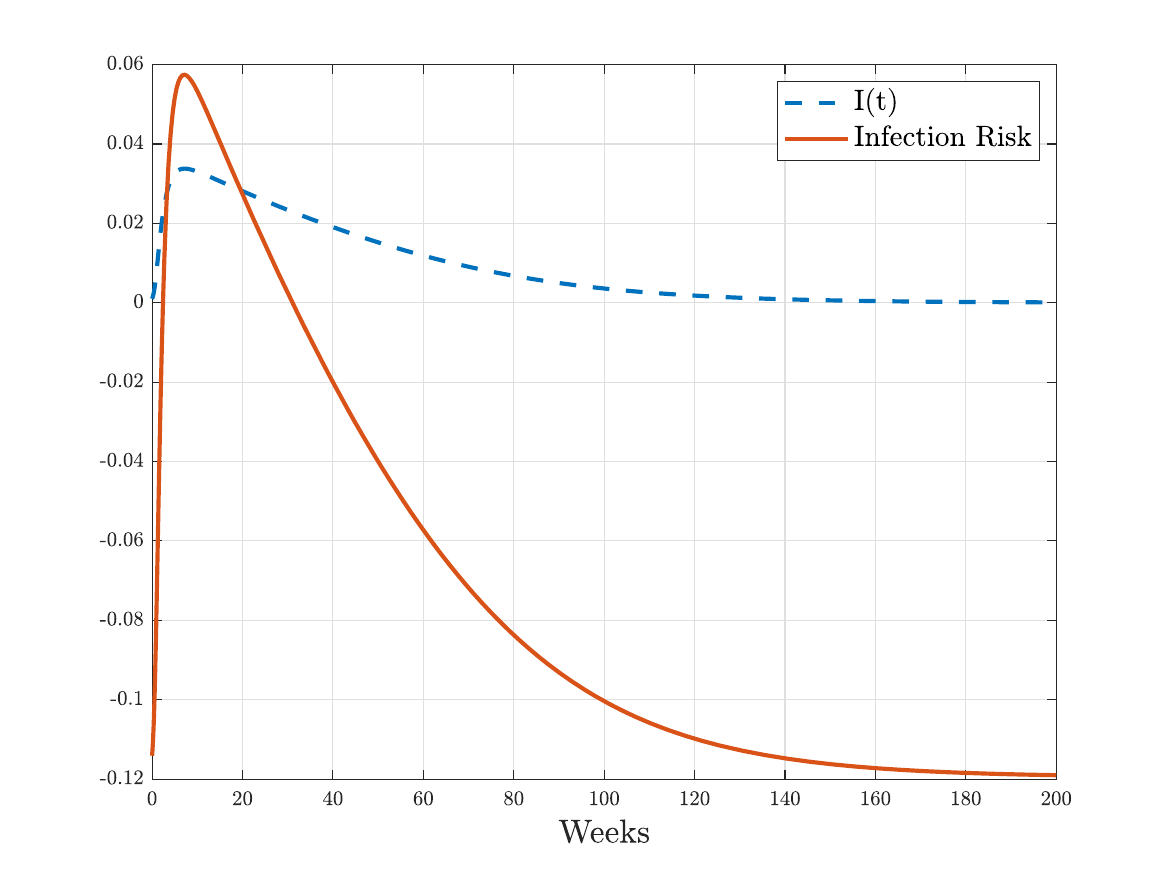}}\caption{Time path of $I(t)$ in dashed blue and $2a-C \beta \cdot \frac{S(t)I(t)}{I(t)+S(t)}$ in solid red for $R_0^{low}=2.5$ (top) and $R_0^{high}=7.0$ (bottom). The high infection risk condition holds when the value of the solid red curve is positive. The average disease length is $14$ days and the infection cost is equal to the value of six months of normal activity.}\label{fig:riskcomp}
\end{figure}

Figure~\ref{fig:riskcomp} shows the infection paths in dashed blue and the infection risk $2a-C \beta \cdot \frac{S(t)I(t)}{I(t)+S(t)}$ in solid red for each of the two values of $R_0$. The high infection risk condition holds when the solid red curve is positive. When the reproductive number is $R_0^{low}=2.5$, there is not a high infection risk region; though behavioral responses still matter for comparative statics, they will be less important than direct effects. When the reproductive number is $R_0^{high}=7.0$, the high infection risk condition holds near and after the epidemic's peak. This suggests that behavioral forces can differ substantially within a plausible range of epidemiological parameters.

This section described a calibration exercise using parameter values from the epidemiology and economics literatures, but an alternate approach would be to determine these values via a structural estimation. One could ask which values of the three model parameters most closely fit data on COVID-19 infections and/or mobility data on social distancing. Such an exercise may be particularly relevant for determining the infection cost $C$ that can best explain empirical behavior.

\subsection{Discussion of Policy Implications}

We now briefly discuss policy implications of our comparative statics results. Consider a public health intervention that decreases the transmission rate $\beta$ or the infection cost $C$. When behavioral responses are small, these interventions will be effective on their own. When behavioral responses are sufficiently large, however, such an intervention will be less effective or even counterproductive in decreasing short-term infections or infection costs. The intervention may still increase welfare in our model, but if decreasing infections or infection costs is the primary policy goal then they will be ineffective, at least alone. One approach is to pair such public health interventions with policies, such as lockdowns, that maintain social distancing; the recent literature on optimal lockdowns (e.g., \citealp*{acemoglu2020multirisk}) sheds some light on the tradeoffs involved.

The high infection risk condition tells us when the short-term behavioral effect of certain policy changes is larger than the direct effect. More generally, the difference $$2a-C\beta \cdot \frac{S(t)I(t)}{S(t)+I(t)}$$
may be a useful measure of the size of these behavioral effects. When this difference is negative but small, behavioral responses will substantially dampen the effects of policy changes. In Section~\ref{sec:gen}, we will see how to extend this measure beyond the baseline model to allow utility functions that are more general in several respects.

\section{Heterogeneity}

We now incorporate heterogeneity in economic preferences and epidemiological characteristics. We begin by characterizing equilibrium actions given an arbitrary type space. Restricting to two types, we explicitly describe the behavior of high-risk and low-risk populations over time.
\subsection{Equilibrium}\label{sec:het}

A feature of our model with quadratic utility is that behavior continues to take a simple analytic form with heterogeneous populations. We first give a procedure for determining equilibrium actions at a given time $t$.

Suppose there are $m$ types with each type $k$ constituting a share $\alpha_k$ of the population. Each type has infection cost $C_k$ and receives flow payoffs $q-a_k q^2.$ Variation in $C$ can capture differences in health risk due to age and comorbidities. Variation in preferences can also distinguish essential and non-essential workers. We will allow for heterogeneous social structure via biased meeting probabilities and spillovers between types below.

For each $k$ we let $S_k(t)$, $I_k(t)$, and $R_k(t)$ be the population share of susceptible, infected, and recovered individuals of type $k$ at time $t$, so that $S_k(t)+I_k(t)+R_k(t)=\alpha_k$. As in the homogeneous case, recovered individuals choose $q_k(t)=\overline{q}$ for all $t$. An equilibrium strategy is now given by levels of activity $(q_k(t))_{k=1}^{m}$ for susceptible or infected individuals of each type such that $$q_k(t)=\text{argmax}_q  \left\{ q-a_kq^2   - C_kq \cdot \left(\sum_{k'=1}^m q_{k'}(t) I_{k'}(t)\right) \beta  \cdot \frac{S_k(t)}{S_k(t)+I_k(t)} \right\}$$
for all $k$ and all $t$.

The maximization problem differs from the homogeneous case in that (1) the infection cost and the probability of being susceptible now differ across types and (2) the probability of meeting infected individuals of another type now depends on the infection prevalence and activity level for that type, as captured by the summation across types $k'$.

In the homogeneous case, the equilibrium activity level $q(t)$ is always positive. If all other individuals chose $q(t)=0$, then activity would be safe and so the best response would be $q(t)=\overline{q}$. A key difference under heterogeneity is that high-risk types may choose $q(t)=0$ because low-risk types are active enough to make any positive level of activity unsafe. That is, if there are enough low-risk infected people in public spaces, high-risk people may stay home entirely.

Consider an equilibrium at time $t$ at which some set of types $A \subset \{1,\hdots,m\}$ choose positive activity levels while other types $k \notin A$ choose $q_k(t)=0$. Then the activity levels of individuals in $A$ are determined by the corresponding first-order conditions. Letting $M_A=\left(C_k\beta \cdot \frac{S_k(t)I_{k'}(t)}{I_k(t)+S_k(t)}  \right)_{ k,k' \in A}$, equilibrium activity is 
\begin{equation}\label{eq:het}(q_k(t))_{k \in A} = \left(M_{A} + 2 \cdot \textnormal{diag}(a_k) \right)^{-1} \bf{1} .\end{equation}
Here $\text{diag}(a_k)$ is the diagonal matrix with entries $a_k$ and $\bf{1}$ is the column vector of ones. 

To determine the set of types $A$ choosing positive actions at equilibrium, we begin with all types in $A_0=\{1,\hdots,m\}$. Given $A_j$, we obtain $A_{j+1}$ by removing from $A_j$ all types $k$ such that the $k^{th}$ entry of $\left(M_{A_j} +  2 \cdot \text{diag}(a_k)\right)^{-1} \bf{1}$ is negative. The following proposition shows that this process stabilizes within at most $m$ steps and that equilibrium activity is given by equation (\ref{eq:het}) with $A=A_m$.

\begin{prop}\label{prop:het}
Equilibrium activity is given by
\begin{equation*}\label{eq:hetprop}(q_k(t))_{k \in A_m} = \left(M_{A_m} + 2 \cdot \mbox{diag}(a_k) \right)^{-1} \bf{1}\end{equation*}
and $q_k(t)=0$ for $k \notin A_m$.
\end{prop}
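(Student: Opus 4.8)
The plan is to collapse the $m$-dimensional best-response problem to a one-dimensional fixed point, use the rank-at-most-one structure of $M_A$ to evaluate the inverse in (\ref{eq:het}) explicitly, and then verify that the elimination procedure lands on exactly that fixed point. For the first step, note that the only aggregate entering a type-$k$ agent's problem is $\Lambda:=\sum_{k'}q_{k'}(t)I_{k'}(t)$; writing $\overline q_k=\tfrac{1}{2a_k}$ and $u_k=C_k\beta\,\tfrac{S_k(t)}{S_k(t)+I_k(t)}$, that agent maximizes the concave quadratic $q(1-u_k\Lambda)-a_kq^2$ over $q\ge0$, so her unique optimum is $q_k=\overline q_k(1-u_k\Lambda)_+$. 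Hence an action profile is an equilibrium at $t$ if and only if its aggregate solves $\Lambda=\Phi(\Lambda)$, where $\Phi(\Lambda):=\sum_k I_k(t)\,\overline q_k\,(1-u_k\Lambda)_+$, and then $q_k(t)=\overline q_k(1-u_k\Lambda)_+$. Each summand of $\Phi$ is non-increasing and bounded on $[0,\infty)$, so $\Lambda\mapsto\Phi(\Lambda)-\Lambda$ is strictly decreasing, nonnegative at $\Lambda=0$, and tends to $-\infty$; it therefore has a unique zero $\lambda^\ast$, giving existence and uniqueness of equilibrium with $q_k(t)=\overline q_k(1-u_k\lambda^\ast)_+$.

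For the second step, since $(M_A)_{kk'}=u_kI_{k'}(t)$, the matrix $M_A$ equals $(u_k)_{k\in A}(I_{k'}(t))_{k'\in A}^{\top}$ and has rank at most one; by the Sherman--Morrison formula, $M_A+2\,\mathrm{diag}(a_k)$ is invertible with
$$\bigl(M_A+2\,\mathrm{diag}(a_k)\bigr)^{-1}\mathbf 1=\bigl(\overline q_k(1-u_k\lambda_A)\bigr)_{k\in A},\qquad \lambda_A:=\frac{\sum_{k\in A}I_k(t)\,\overline q_k}{1+\sum_{k\in A}I_k(t)\,u_k\,\overline q_k}.$$
In particular this vector's $k$-th coordinate is negative exactly when $u_k\lambda_A>1$, so the step $A_j\mapsto A_{j+1}$ deletes precisely $\{k\in A_j:u_k\lambda_{A_j}>1\}$; since $|A_j|$ strictly decreases until it stabilizes, the procedure terminates in at most $m$ steps.

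The crux is the third step, showing $\lambda_{A_m}=\lambda^\ast$. The key lemma is a monotonicity property: if $A'$ is obtained from $A$ by deleting exactly the indices with $u_k\lambda_A>1$, then $\lambda_{A'}\ge\lambda_A$ --- a short computation shows $\lambda_{A'}-\lambda_A$ has the same sign as $\sum_{k\in A\setminus A'}I_k(t)\,\overline q_k\,(\lambda_A u_k-1)$, a sum of nonnegative terms. Iterating yields $\lambda_{A_0}\le\lambda_{A_1}\le\dots\le\lambda_{A_m}$. Now each $k\notin A_m$ was deleted at some step $j$ with $u_k\lambda_{A_j}>1$, hence $u_k\lambda_{A_m}\ge u_k\lambda_{A_j}>1$; and each $k\in A_m$ satisfies $u_k\lambda_{A_m}\le1$ by stability. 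Evaluating $\Phi$ at $\lambda_{A_m}$, the terms with $k\notin A_m$ vanish and $\Phi(\lambda_{A_m})=\sum_{k\in A_m}I_k(t)\,\overline q_k(1-u_k\lambda_{A_m})$, which equals $\lambda_{A_m}$ by the definition of $\lambda_{A_m}$. So $\lambda_{A_m}$ is a fixed point of $\Phi$ and therefore equals $\lambda^\ast$. Combining the steps: for $k\in A_m$, $q_k(t)=\overline q_k(1-u_k\lambda^\ast)_+=\overline q_k(1-u_k\lambda_{A_m})=\bigl[(M_{A_m}+2\,\mathrm{diag}(a_k))^{-1}\mathbf 1\bigr]_k$, while for $k\notin A_m$, $u_k\lambda^\ast>1$ forces $q_k(t)=0$.

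I expect the main obstacle to be precisely this third step: ruling out that the procedure halts prematurely at a set $A_m$ with $\lambda_{A_m}<\lambda^\ast$, which would wrongly prescribe positive activity to a genuinely inactive type. The monotonicity lemma, combined with the observation that every eliminated index retains $u_k\lambda_{A_m}>1$, is exactly what closes this gap; the one calculation that needs care is the sign in the monotonicity lemma, which relies on $I_k(t),u_k,a_k\ge0$ and on the deleted indices satisfying $\lambda_A u_k>1$. The scalar reduction, the Sherman--Morrison identity, and the termination count are otherwise routine.
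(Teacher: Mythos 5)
Your proof is correct, and it takes a more explicit, computational route than the paper's. Both arguments ultimately rest on the same structural fact---the game aggregates through $\Lambda=\sum_{k'}I_{k'}(t)q_{k'}(t)$ and best responses are decreasing in that scalar---but you exploit it differently. The paper keeps the argument qualitative: it notes that passing from $A_j$ to $A_{j+1}$ raises the truncated actions to zero, argues by contradiction that the equilibrium aggregate cannot fall, and concludes that eliminated types stay eliminated; uniqueness follows from monotonicity of best responses in the aggregate. You instead (i) reduce equilibrium to the one-dimensional fixed-point equation $\Lambda=\Phi(\Lambda)$ with $\Phi$ strictly decreasing after subtracting $\Lambda$, which gives existence and uniqueness cleanly; (ii) observe that $M_A$ has rank one and invert $M_A+2\,\mathrm{diag}(a_k)$ in closed form via Sherman--Morrison, yielding the explicit aggregate $\lambda_A$ (your formula checks out: the numerator of $\lambda_{A'}-\lambda_A$ equals $(1+Q)\sum_{k\in A\setminus A'}I_k\overline q_k(\lambda_Au_k-1)$ with $Q=\sum_{k\in A}I_ku_k\overline q_k$, so the sign claim is right); and (iii) verify that $\lambda_{A_m}$ is the fixed point of $\Phi$, which certifies that the procedure does not halt prematurely. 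What your approach buys is a closed-form expression for the equilibrium aggregate and actions and a fully quantitative version of the monotonicity step that the paper leaves somewhat terse; what the paper's approach buys is brevity and the fact that it does not rely on the rank-one structure of $M_A$ (so it survives the biased-matching extension with weights $\gamma_{kk'}$ mentioned right after the proposition, where $M_A$ need not be rank one and Sherman--Morrison no longer applies). The only cosmetic caveat is the boundary case $u_k\lambda_{A_j}=1$, where a type is retained in $A_{j+1}$ but assigned zero activity by the formula; this is consistent with the statement and causes no difficulty.
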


The proposition gives a procedure for determining which types will choose positive activity, and expresses these types' behavior in terms of model parameters and state variables. Since people play a quadratic game at each time $t$, we obtain a simpler expression for behavior than other functional forms for utility would allow. The procedure repeatedly removes types for whom equation~(\ref{eq:het}) does not give a feasible activity level. The strategic substitutability of actions at time $t$ ensures this process does characterize the equilibrium.

The characterization in equation~(\ref{eq:het}) can be extended to allow biased matching, though determining the set $A$ of types choosing positive activity may be more complicated. For example, we could let individuals of type $k$ meet individuals of type $k'$ in each disease state at rates $\gamma_{kk'}q_k(t)q_{k'}(t)S_{k'}(t)$, $\gamma_{kk'}q_k(t)q_{k'}(t)I_{k'}(t)$, and $\gamma_{kk'}q_k(t)q_{k'}(t)R_{k'}(t)$ for some symmetric matrix $(\gamma_{kk'})_{1\leq k,k'\leq m}$. The weights $\gamma_{kk'}$ allow a biased matching process, with meeting probabilities depending on factors such as geography or social structure as well as activity levels. Then equation~(\ref{eq:het}) holds with $M_A=\left(C_k\beta\gamma_{kk'} \cdot \frac{S_k(t)I_{k'}(t)}{I_k(t)+S_k(t)}  \right)_{ k,k' \in A}$.

The approach we have used to allow heterogeneity in preferences can also be used to modify the set of possible infection statuses. We briefly discuss one example, partial immunity, which is particularly relevant for COVID-19. Suppose that recovered individuals are still vulnerable to interactions with infected individuals, but are infected with some lower probability $0<\beta^R < \beta$. The same approach used in Proposition \ref{prop:het} will characterize the behavior of individuals who have never recovered as well as those who have recovered at least once. Depending on the value of $\beta_R$, which measures the extent of the partial immunity, the infection can become endemic ($\liminf_t I(t) >0$) or eventually die out ($\lim_t I(t) = 0$).

\subsection{Two Types}

One important case is heterogeneity in health risks. We next discuss the dynamics when there are two types with costs $C_1 < C_2$, who we will call young and old individuals respectively. For simplicity we assume unbiased matching and assume that both types have the same flow payoffs $q(t)-aq(t)^2$.

With a small initial prevalence, both types will interact and become infected at positive rates early or late enough in the outbreak. But if the gap between the types' infection costs is large enough, then only young people will choose positive activity levels near the peak. The next result describes when this occurs:

\begin{prop}\label{prop:twotype}
Old individuals choose positive activity level $q_2(t)>0$ if and only if 
$$C_2 \beta  \cdot \frac{I_1(t)S_2(t)}{S_2(t)+I_2(t)} \geq  2a+ C_1 \beta \cdot \frac{I_1(t)S_1(t)}{I_1(t) + S_1(t)}.$$
If old individuals stop choosing $q_2(t)>0$ at time $t_1$ and resume choosing $q_2(t)>0$ again at time $t_2$, then $I_1(t_1) > I_1(t_2).$
\end{prop}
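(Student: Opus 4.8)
The plan is to handle the two assertions in turn. For the characterization of when $q_2(t)>0$, I would specialize the elimination procedure preceding Proposition~\ref{prop:het} to $m=2$, $a_1=a_2=a$, unbiased matching, and $C_1<C_2$. A useful preliminary remark is that young individuals are active at every equilibrium: if $q_1(t)=0$, then any individual whose infection cost is at least $C_1$ faces activity that is (weakly) safe, so the best response to $q=0$ would be $\overline q>0$, a contradiction. Hence the elimination procedure of Proposition~\ref{prop:het}, started from $\{1,2\}$, either terminates at $\{1,2\}$ or removes only type~2, so $q_2(t)>0$ exactly when the second coordinate of $(M_{\{1,2\}}+2a\,\mathrm{diag}(a_k))^{-1}\mathbf{1}$ is positive. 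Writing $g(x,y)=\tfrac{xy}{x+y}$, the $2\times2$ inversion is clean because the off-diagonal products cancel: the determinant equals $2a\bigl(2a+C_1\beta\,g(S_1,I_1)+C_2\beta\,g(S_2,I_2)\bigr)>0$, and the second coordinate is a positive multiple of $2a+C_1\beta\,\tfrac{S_1I_1}{S_1+I_1}-C_2\beta\,\tfrac{S_2I_1}{S_2+I_2}$. The sign of this quantity governs whether $q_2(t)>0$ or $q_2(t)=0$, the threshold being the equation $C_2\beta\,\tfrac{I_1S_2}{S_2+I_2}=2a+C_1\beta\,\tfrac{S_1I_1}{S_1+I_1}$ obtained by setting it to zero.

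For the second assertion, note first that on $[t_1,t_2]$ old individuals are inactive, so type~1 is the only active type and $q_1(t)=\bigl(2a+C_1\beta\,g(S_1(t),I_1(t))\bigr)^{-1}$; since the population shares are continuous in $t$ and $q_2>0$ just to the left of $t_1$ while $q_2=0$ just to its right, the threshold equation above holds with equality at $t_1$, and similarly at $t_2$. Rewrite that equation as $\Phi(t):=C_2\beta\,q_1(t)I_1(t)\,\tfrac{S_2(t)}{S_2(t)+I_2(t)}=1$. Since $q_2\equiv0$ on $[t_1,t_2]$, the old-population dynamics collapse to $\dot S_2=0$ and $\dot I_2=-\kappa I_2$, so $S_2$ is constant and $I_2$ is strictly decreasing there (both staying positive), whence $\tfrac{S_2}{S_2+I_2}$ is strictly increasing on $[t_1,t_2]$. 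Comparing $\Phi(t_1)=\Phi(t_2)=1$ then forces $\psi(t_1)>\psi(t_2)$, where I write $\psi(S_1,I_1):=\tfrac{I_1}{2a+C_1\beta\,g(S_1,I_1)}=q_1I_1$ and $\psi(t):=\psi(S_1(t),I_1(t))$.

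It remains to deduce $I_1(t_1)>I_1(t_2)$ from $\psi(t_1)>\psi(t_2)$, which is where the only real subtlety lies. On $[t_1,t_2]$ one has $\dot S_1=-\beta q_1^2S_1I_1<0$, so $S_1$ is strictly decreasing; consequently $g(S_1,I_1)$ need not be monotone along the trajectory and a coordinatewise comparison inside $g$ is unavailable, so one must work with $\psi$ directly. The point is that $\psi$ is strictly decreasing in $S_1$ (its denominator is strictly increasing in $S_1$) and strictly increasing in $I_1$ (the numerator of $\partial\psi/\partial I_1$ works out to $2a+C_1\beta\,\tfrac{S_1I_1^2}{(S_1+I_1)^2}>0$). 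Hence if one had $I_1(t_2)\ge I_1(t_1)$, then, combining these two monotonicities with $S_1(t_2)<S_1(t_1)$, $\psi(t_2)=\psi(S_1(t_2),I_1(t_2))\ge\psi(S_1(t_2),I_1(t_1))>\psi(S_1(t_1),I_1(t_1))=\psi(t_1)$, contradicting $\psi(t_1)>\psi(t_2)$. Therefore $I_1(t_1)>I_1(t_2)$. I expect the first assertion to be essentially bookkeeping once Proposition~\ref{prop:het} is invoked; the heart of the argument is recognizing that $\psi=q_1I_1$ — not $g(S_1,I_1)$ or $I_1$ alone — is the monotone statistic, along with the minor checks that $S_1,I_1,S_2,I_2$ all stay positive on $(t_1,t_2)$ so the strict monotonicities are genuine.
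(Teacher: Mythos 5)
Your proof is correct and follows essentially the same route as the paper's: the threshold for $q_2(t)>0$ comes from substituting the single-active-type formula $q_1(t)=\bigl(2a+C_1\beta\tfrac{S_1I_1}{S_1+I_1}\bigr)^{-1}$ into type 2's participation constraint $C_2\beta q_1I_1\tfrac{S_2}{S_2+I_2}\ge 1$ (your $2\times 2$ inversion reproduces exactly this), and the second claim follows from the equalities at $t_1,t_2$, the monotonicity of $\tfrac{S_2}{S_2+I_2}$, and the resulting decrease of $q_1I_1$, with your explicit monotonicity of $\psi(S_1,I_1)=q_1I_1$ in each argument being a cleaner packaging of the paper's chain of inequalities through $q_1$. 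One minor slip: your justification that type 1 is always active ("if $q_1=0$ then activity is weakly safe") ignores the infection risk from active infected type-2 individuals, so it does not quite rule out $q_1=0,\,q_2>0$ on its own — though the paper's own proof makes the same implicit assumption without comment.
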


\begin{figure}
\center{\includegraphics[scale=.65]{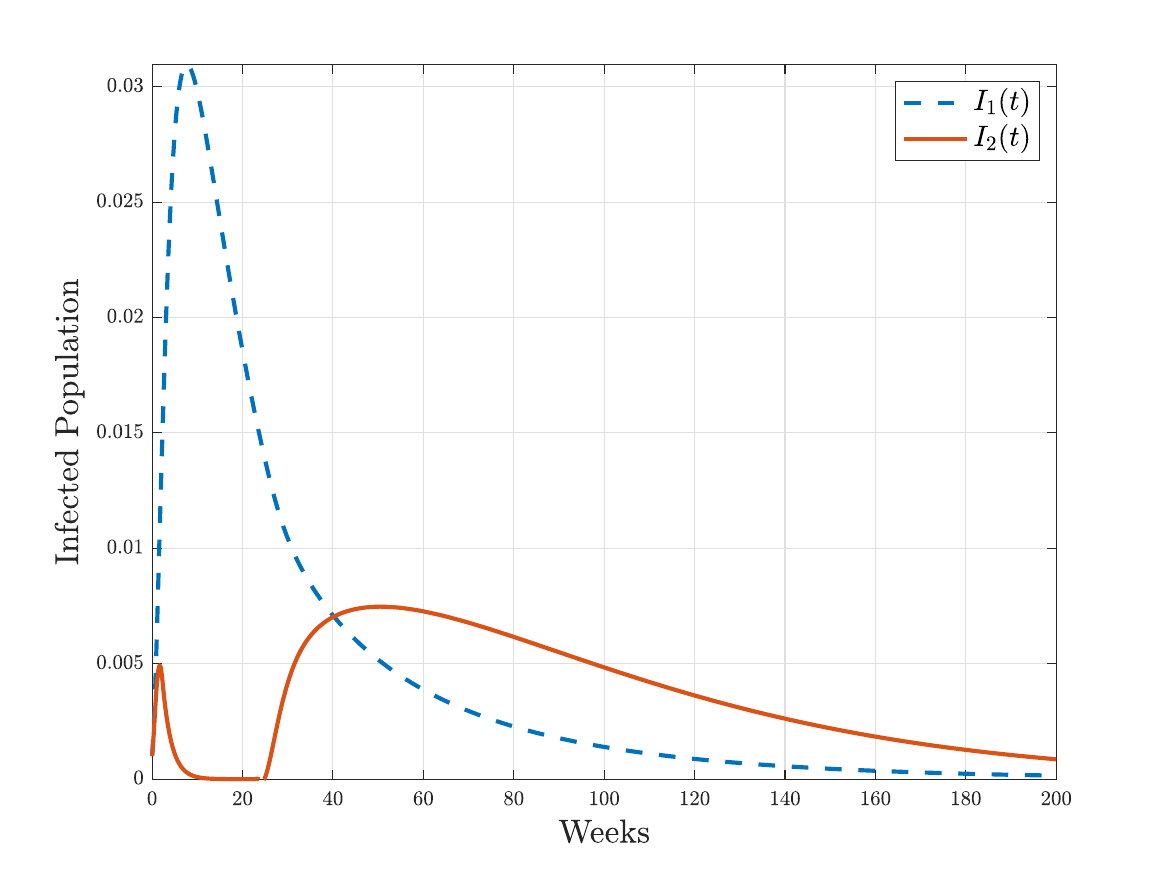}

\includegraphics[scale=.65]{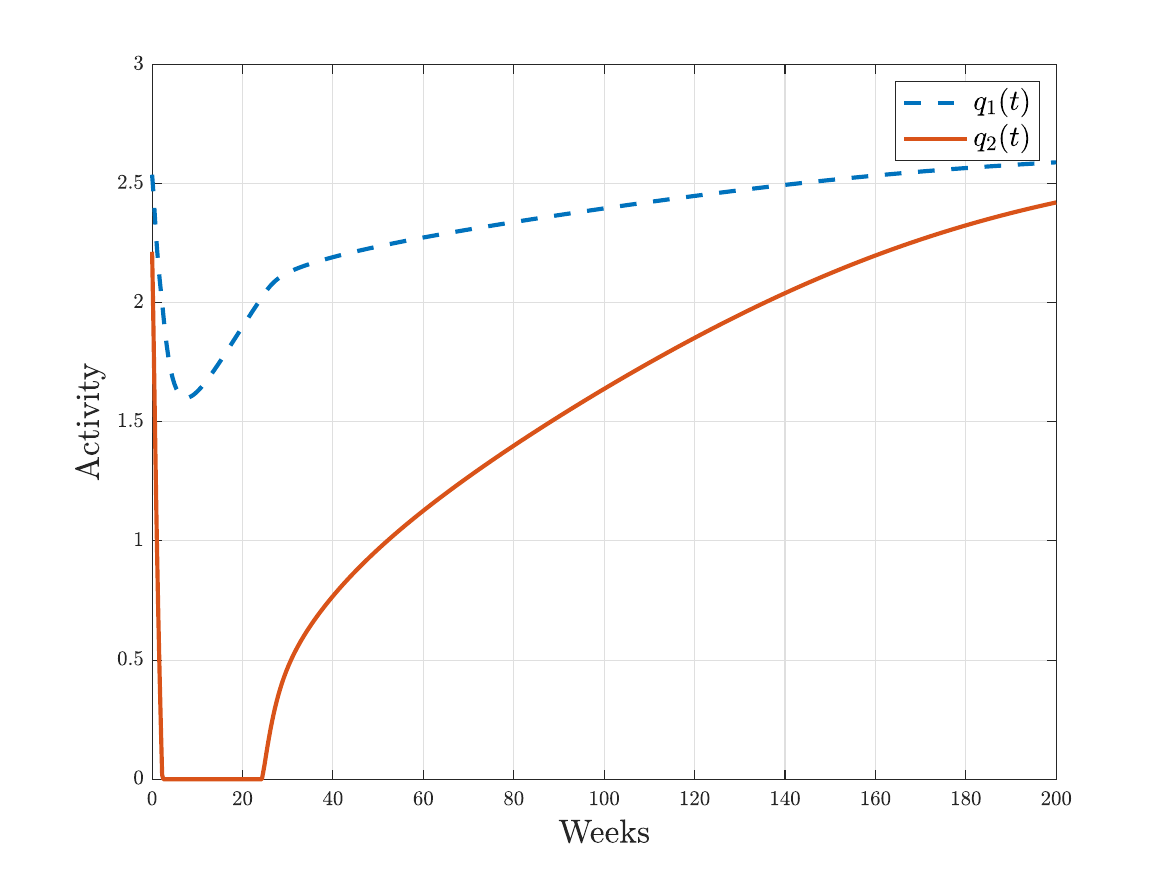}}

\caption{High-risk individuals may choose $q_k(t)=0$ when prevalence is high. Graphs show the time paths of $I_k(t)$ (top) and $q_k(t)$ (bottom) for low-risk indiviudals (dashed blue) and high-risk individuals (solid red). Each group has size $\alpha_k=\frac12$, and infection costs $C_1$ and $C_2$ are equal to the value of three months of normal activity and one year of normal activity, respectively. The reproductive number is $R_0=5.7$ and the average disease length is $14$ days.}\label{fig:twogroup}
\end{figure}

The first part of the proposition describes when old individuals choose zero activity in terms of the population disease state and model parameters. The second part states that if old individuals stop interacting with others at some time, they will not stop again until the disease prevalence among young people is strictly lower than when they stopped.

So if one group leaves the economy, it is more difficult to convince them to resume activity. The main force is that if an individual has not engaged in any activity for a long period (and has not contracted the disease and recovered), that individual is very likely to be susceptible. Thus, infection becomes a bigger concern over time, and returning to the economy requires safer conditions than were acceptable before leaving. The proof shows this intuition continues to hold after allowing for  changes in young people's behavior over time.

Figure~\ref{fig:twogroup} shows an example of the infection rates (top) and levels of activity (bottom) in each group. In the figure, the two groups are equally sized and $I_1(0) = I_2(0)=0.001$ while the remaining initial population is susceptible. The infection costs are equal to the value of three months of normal activity for the low-risk group and one year of normal activity for the high group. We consider the higher value of $R_0=7.0$ from  Section~\ref{sec:cal}.

We have $I_1(t) \approx 0.20$ when old individuals cease activity and $I_2(t) \approx 0.17$ when old individuals resume activity. The threshold infection rate for ceasing activity is higher than the threshold for resuming activity. We also observe that when old individuals resume activity, they do so rapidly. Indeed, when the gap between the costs $C_1$ and $C_2$ is larger than in this example figure, the rapid resumption of activity can lead to a non-monotonicity in $q_2(t)$ as young individuals decrease activity when old individuals return to the economy.

Several recent numerical models have also introduced heterogeneity, with a focus on lockdowns targeted at older individuals (\citealp*{brotherhood2020economic} and \citealp*{acemoglu2020multirisk}). We instead describe the decentralized behavior of each group in a heterogeneous population analytically. One finding under our behavioral specification is that targeted lockdowns may not always be binding, even if these policies are quite strict.

\section{Payoff Structure}\label{sec:gen}

We show that our risk compensation results from Section~\ref{sec:risk}  generalize beyond myopic agents with utility function $q-aq^2$ for economic activity.  We first show that behavioral responses are strengthened by strategic complementarities and weakened by strategic substitutability. Second, we allow the flow payoffs from economic activity to be a general concave function. Third, we consider forward-looking agents with an arbitrary discount rate.

\subsection{Strategic Spillovers}\label{sec:payoffcomp}

We first extend the quadratic functional form to allow activity levels to be strategic complements or substitutes. Economic activity may become more attractive when others social distance, as opportunities for social interaction and exchange decrease, or less attractive, as competition may decrease.

In this subsection, we first allow spillovers in a homogeneous population and then allow spillovers that depend on types. Consider flow payoffs from activity 
\begin{equation}\label{eq:quadcomp}q_i(t)-aq_i(t)^2+b \int_j q_i(t)q_j(t),\end{equation}
where $a>0$ and $b<2a$. The integral is taken over all individuals $j$, and allows incentives to depend on the average action  in the population. Actions are strategic complements for $b>0$ and substitutes for $b<0$.

The definition of equilibrium becomes:
\begin{defin}

An \textbf{equilibrium} given initial conditions $(S(0),I(0),R(0))$ is a sequence of action profiles and shares $(q_S(t), q_R(t), S(t), I(t), R(t))_{t=0}^{\infty}$ such that for all $t \geq 0$
\begin{equation}\label{eq:foccompS}
q_S(t)=\text{argmax}_q \left\{q - a q^2 + bq \cdot \left( \int_j q_j(t) \right)- C q q_S(t) \beta \cdot \frac{I(t)S(t)}{S(t)+I(t)}\right\},
\end{equation}
\begin{equation}\label{eq:foccompR}
q_R(t)=\text{argmax}_q \left\{q - a q^2 + bq\cdot \int_j q_j(t) \right\},
\end{equation}
where the average action $\int_j q_j(t) $ is equal to $(S(t)+I(t))q_S(t) + R(t)q_R(t),$ and equations (\ref{eq:dynamicsS}), (\ref{eq:dynamicsI}), and (\ref{eq:dynamicsR}) hold.
\end{defin}

The equilibrium action $q_R(t)$ of recovered individuals is no longer constant, because the optimal value of $q_R(t)$ depends on $q_S(t)$ and the size of the recovered population. Equations~(\ref{eq:foccompS}) and~(\ref{eq:foccompR}) give linear first-order conditions for $q_S(t)$ and $q_R(t)$, and we can solve for the equilibrium action\footnote{Because we assume $a>0$ and $b<2a$, there exists a unique equilibrium and the activity levels are positive at equilibrium.}
\begin{equation}\label{eq:payoffcompqs}q_S(t) = \frac{1+\frac{b R(t)}{2a-b}}{2a-b(S(t)+I(t))\left(1+\frac{b R(t)}{2a-b}\right) + C\beta \cdot \frac{S(t)I(t)}{S(t)+I(t)}}.\end{equation}

We next extend our risk compensation results from Section~\ref{sec:risk}. To do so, we first define a high infection risk condition that allows for strategic spillovers:

\begin{defin}
The \textbf{high infection risk} condition holds at time $t$ if \begin{equation}\label{eq:hirgen}C \beta \cdot \frac{S(t)I(t)}{I(t)+S(t)}> 2a- b (S(t)+I(t))\left(1+\frac{b R(t)}{2a-b}\right).\end{equation}
\end{defin}
The definition generalizes Definition~\ref{def:hir}. The new term $$- b (S(t)+I(t))\left(1+\frac{b R(t)}{2a-b}\right)$$ on the right-hand side is positive under strategic substitutes and negative under strategic complements. So high infection risk is easier to achieve under strategic complements and harder to achieve under strategic substitutes. 

As in Theorem~\ref{thm:prevalenceIR}, we take $(S^x(t),I^x(t),R^x(t)) = (S(t)-x,I(t)+x,R(t))$ and let $\iota^x(t)$ be the corresponding infection rate. The following two theorems hold, just as in the baseline model.

\begin{propbis}{thm:prevalenceIR}
Suppose $I(t)< S(t)$. Then $\left. \frac{\partial \iota^x(t)}{\partial x}\right|_{x=0}<0$
if and only if the high infection risk condition holds at time $t$.
\end{propbis}

\begin{propbis}{thm:policy}
The following are equivalent:
\begin{enumerate}[(1)]
\item The high infection risk condition holds at time $t$,
\item Marginally decreasing $\beta$ at time $t$  increases $\iota(t)$, and
\item Marginally decreasing $C$ at time $t$ increases $C\iota(t)$.
\end{enumerate}
\end{propbis}

As in Section~\ref{sec:risk}, strong behavioral responses occur in the high infection risk region. The cross partial term $-b (I(t)+S(t))\left(1+\frac{b R(t)}{2a-b}\right)$ from expression~(\ref{eq:hirgen}) implies that the behavioral response to changes in environment or policy is larger when actions are strategic complements and smaller when they are strategic substitutes. Complementarity in actions amplifies behavioral responses, as economic activity becomes less desirable when others engage in more social distancing. Substitutability weakens behavioral responses, as economic activity becomes more desirable when others engage in more social distancing.

Strategic spillovers matter more when $b (I(t)+S(t))\left(1+\frac{b R(t)}{2a-b}\right)$ is larger. This tends to be when $S(t)+I(t)$ is larger, because susceptible and infected individuals respond directly to changes in risk levels while recovered individuals only respond to others' changes. This need not be the case if there are strong enough complementarities for $2a-b$ is small, however.

\subsection{General Flow Payoffs}

We next relax the quadratic utility function from the baseline model. Consider flow payoffs from activity $f(q_i)$, where $f$ is twice differentiable, strictly concave, increasing at zero, and decreasing for $q_i$ sufficiently large.\footnote{We assume that flow payoffs do not depend on others' economic activity, but could also allow strategic spillovers as in Section~\ref{sec:payoffcomp}. The high infection risk condition would then have an additional term depending on the cross partial derivative of $f$ in an individual's own action and the average action. We describe the two generalizations separately for simplicity.}

The definition of equilibrium becomes:
\begin{defin}
An \textbf{equilibrium} given initial conditions $(S(0),I(0),R(0))$ is a sequence of action profiles and shares $(q_S(t), q_R(t), S(t), I(t), R(t))_{t=0}^{\infty}$ such that for all $t \geq 0$
\begin{equation}\label{eq:equilibriumS}
q_S(t)=\text{argmax}_q \left\{ f(q) - C q q_S(t) \beta \cdot \frac{I(t)S(t)}{S(t)+I(t)} \right\}, q_R(t)=\text{argmax}_q f(q),
\end{equation} and equations (\ref{eq:dynamicsS}), (\ref{eq:dynamicsI}), and (\ref{eq:dynamicsR}) hold.
\end{defin}

Because $f'(0)>0$, both the equilibrium actions $q_S(t)$ and $q_R(t)$ are positive for all $t$. The equilibrium action $q_R(t)$ is a constant depending on $f$. The equilibrium action $q(t)=q_S(t)$ of susceptible and infected individuals may not have an analytic solution, but is characterized by the first-order condition
\begin{equation}\label{eq:focgen}f'(q(t))- C q(t) \beta \cdot \frac{I(t)S(t)}{S(t)+I(t)} = 0.\end{equation} The high infection risk condition can again be generalized:


\begin{defin}
The \textbf{high infection risk} condition holds at time $t$ if \begin{equation}\label{eq:hirgen}C \beta\cdot \frac{I(t)S(t)}{S(t)+I(t)} > -f''(q(t)).\end{equation}
\end{defin}
Because $f$ is strictly concave, the right-hand side is positive. The high infection risk condition is more likely to hold when the elasticity of the activity level $-f''(q(t))$ is smaller, i.e., payoffs are more concave in activity levels. In this case individuals will react more to changes in risk levels by increasing or decreasing social distancing.

Our risk compensation results continue to hold:

\begin{proptris}{thm:prevalenceIR}
Suppose $I(t)< S(t)$. Then $\left. \frac{\partial \iota^x(t)}{\partial x}\right|_{x=0}<0$
if and only if the high infection risk condition holds at time $t$.
\end{proptris}

\begin{proptris}{thm:policy}
The following are equivalent:
\begin{enumerate}[(1)]
\item The high infection risk condition holds at time $t$,
\item Marginally decreasing $\beta$ at time $t$  increases $\iota(t)$, and
\item Marginally decreasing $C$ at time $t$ increases $C\iota(t)$.
\end{enumerate}
\end{proptris}

A simple example which nests the quadratic functional form is  $$f(q_i(t)) = q_i(t) -a q_i(t)^{\alpha},$$ where $\alpha>1$. Then the elasticity of activity levels with respect to infection risk is equal to $a\alpha(\alpha-1)q^{\alpha-2}.$ This elasticity is decreasing in $C \beta \cdot \frac{S(t)I(t)}{I(t)+S(t)}$ when $\alpha <2$ and increasing in $C \beta \cdot \frac{S(t)I(t)}{I(t)+S(t)}$ when $\alpha>2$. When $\alpha$ is large, behavioral responses to policies are most important at times when there is more social distancing already happening; when $\alpha$ is small this need not be the case.

In Appendix~\ref{sec:calgen}, we repeat the calibration exercise from Section~\ref{sec:cal} with $\alpha=\frac32$ and $\alpha=3$. We find that behavioral responses to changes in risk levels are more important when $\alpha$ is larger, and are more important near the peak of an outbreak for each of the values for $\alpha$.

\subsection{Forward-Looking Agents}\label{sec:fwa}

Our analysis until now has assumed all levels of activity are chosen myopically. We now show that the basic logic behind our results extends to forward-looking agents.

We assume all individuals have discount rate $r>0$. As in the myopic case, an individual pays infection cost $C$ upon transitioning from the susceptible state to the infected state. We choose this timing to maintain consistency between the myopic model and the limit case $r=\infty$ and to obtain clean statements of results, but would obtain similar results under the (likely more realistic) alternate assumption that the infection cost is incurred after infection.

The continuation payoff at time $t_0$ for an individual $i$ choosing activity levels $q_i(t)$ who becomes infected at time $t_1 \geq t_0$ is
$$\int_{t_0}^{\infty}e^{-r(t-t_0)} (q_i(t)-aq_i(t)^2) dt - C e^{-r(t_1-t_0)}.$$
while the continuation payoff at time $t_0$ for an individual $i$ choosing activity levels $q_i(t)$ who does not become infected at any time $t_1 \geq t_0$ is
$$\int_{t_0}^{\infty}e^{-r(t-t_0)} (q_i(t)-aq_i(t)^2) dt.$$
An equilibrium is a sequence of choices of activity levels $q_S(t)$ and $q_R(t)$ and disease states $(S(t), I(t), R(t))$ consistent with the contagion dynamics and such that all individuals maximize their expected continuation payoffs at all times $t$. Individuals therefore respond optimally to the future disease path, which is common knowledge at equilibrium.

Fix an equilibrium and let $\pi_S(t)$, $\pi_I(t)$, and $\pi_R(t)$ be the expected continuation payoffs to individuals in the susceptible, infected, and recovered states, respectively.\footnote{In principle, an equilibrium need not exist and multiple equilibria are possible with forward-looking agents (unlike in the myopic case, where there is a unique equilibrium).  In numerical work on SIR models with endogenous behavior, existence and multiplicity of equilibrium have not been concerns.} The continuation payoffs $\pi_I(t_0)$ and $\pi_R(t_0)$ at time $t_0$ are equal to the conditional expectation of 
$$\int_{t_0}^{\infty}e^{-r(t-t_0)} (q_i(t)-aq_i(t)^2) dt$$
given the disease state at time $t_0$, while $\pi_S(t_0)$ also includes the expected discounted infection cost for an individual who is susceptible at time $t_0$.

We define $\widetilde{C}(t) = C+(\pi_I(t)-\pi_S(t))$ to be the change in payoffs from infection. This is an endogenous object depending on future equilibrium actions and the future likelihood of  infection, as well as the infection cost. The equilibrium activity level at each time $t$ then satisfies
\begin{equation}\label{eq:actfl}q(t)=\frac{1}{2a+\widetilde{C}(t) \beta  \cdot \frac{S(t)I(t)}{S(t)+I(t)}}.\end{equation}
Infection continues to carry a direct cost due to health impacts, but also saves the infected individual from the possibility of future infection. An individual who has been infected and is now immune can interact more without consequence.

We first prove several properties of the change in continuation payoffs from infection $\widetilde{C}(t)$. The bounds we now show can be used to reformulate subsequent results involving $\widetilde{C}(t)$ in terms of the population disease state and exogenous parameters. Let $\overline{R}(\infty)$ be the fraction of the population eventually infected if $q(t)=\overline{q}$ for all $t$, i.e., in the standard SIR model with no behavioral response.
\begin{prop}\label{prop:ctilde}
Suppose the initial prevalence is $I(0)>0$. The continuation payoffs from infection satisfy:

(i) $\left(1-\frac{\overline{R}(\infty) - R(t)}{1-R(t)}\right)C < \widetilde{C}(t) < C$ for all $t$,

(ii) $\lim_{t\rightarrow\infty} \widetilde{C}(t)=C$, and

(iii) $\lim_{I(0)\rightarrow 0} \widetilde{C}(0)=C$.
\end{prop}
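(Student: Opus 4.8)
The plan is to reduce everything to explicit representations of the continuation values and then read the three claims off one identity. Write $g(q)=q-aq^2$ (so $g$ is increasing on $[0,\overline q]$), and note that recovered agents always choose $\overline q$, so $\pi_R(t)\equiv g(\overline q)/r=\tfrac1{4ar}$. Fixing an equilibrium, couple the trajectories of a still-susceptible individual and a currently-infected individual at time $t$: both choose $q(s)$ while not recovered and $\overline q$ afterward; the infected one recovers at $t+E$ with $E\sim\mathrm{Exp}(\kappa)$, and by memorylessness the susceptible one at $\tau+E$, where $\tau\ge t$ is the (possibly infinite) first infection time under the equilibrium hazard $\lambda(s)=\beta q(s)^2I(s)$; only the susceptible one pays, namely $Ce^{-r(\tau-t)}\mathbf 1\{\tau<\infty\}$. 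The first-order condition~(\ref{eq:actfl}) identifies $\widetilde C(t)$ as the per-infection loss borne by an individual who becomes infected at time $t$, so $C-\widetilde C(t)$ is exactly the continuation-payoff benefit conferred by (future) immunity, and the coupling evaluates it as
\[
C-\widetilde C(t)\;=\;C\,D(t)\;+\;\bigl(V_I(t)-V_S(t)\bigr),\qquad D(t):=\mathbb E\!\bigl[e^{-r(\tau-t)}\mathbf 1\{\tau<\infty\}\bigr],
\]
with $V_I(t)-V_S(t)=\mathbb E\bigl[\int_{t+E}^{\tau+E}e^{-r(s-t)}(g(\overline q)-g(q(s)))\,ds\bigr]\ge 0$ the extra flow value from reaching immunity sooner; both right-hand terms are nonnegative. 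Two elementary estimates feed the rest: from $1/q(s)-1/\overline q=\widetilde C(s)\beta\tfrac{S(s)I(s)}{S(s)+I(s)}$ with $\widetilde C(s)\le C$ and $q(s)\le\overline q$ one gets $0\le\overline q-q(s)\le R_0\kappa C\,I(s)$, hence $g(\overline q)-g(q(s))\le\tfrac12 R_0\kappa C\,I(s)$; and $\kappa\int_t^\infty I(s)\,ds=R(\infty)-R(t)$.

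Upper bound and (ii). The prevalence never vanishes ($I$ solves a linear ODE, using Proposition~\ref{prop:limits}), so $\lambda$ is strictly positive and $D(t)>0$; thus $C-\widetilde C(t)\ge CD(t)>0$, i.e.\ $\widetilde C(t)<C$. For the limit, $D(t)\le\mathbb P(\tau<\infty)=1-S(\infty)/S(t)\to 0$ as $S(t)\downarrow S(\infty)>0$, while $V_I(t)-V_S(t)\le\tfrac12 R_0C\bigl(R(\infty)-R(t)\bigr)\to 0$, so $\widetilde C(t)\to C$. (Alternatively, $\widetilde C$ obeys $\dot{\widetilde C}=(r+\lambda)\widetilde C-rC+\kappa(\pi_R-\pi_I)$ with $\pi_R-\pi_I\ge 0$ and $\widetilde C(\infty)=C$, and an excursion $\widetilde C>C$ forces $\dot{\widetilde C}\ge r(\widetilde C-C)$ and hence unbounded growth — impossible since $\widetilde C$ is bounded.)

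Lower bound. Bound $\pi_S(t)$ below by the value of the feasible (suboptimal) policy ``choose $\overline q$ at every instant'': its flow payoff is $g(\overline q)$ forever, so its value equals $\tfrac1{4ar}-CD'(t)$ with $D'(t)=\mathbb E[e^{-r(\tau'-t)}\mathbf 1\{\tau'<\infty\}]$, $\tau'$ the infection time under hazard $\beta\overline q\,q(s)I(s)$. Since also $\pi_I(t)\le\pi_R=\tfrac1{4ar}$, the immunity benefit satisfies $C-\widetilde C(t)\le CD'(t)$, i.e.
\[
\widetilde C(t)\;\ge\;C\bigl(1-D'(t)\bigr)\;\ge\;C\,\mathbb P(\tau'=\infty)\;=\;C\exp\!\Bigl(-\!\int_t^\infty\beta\overline q\,q(s)I(s)\,ds\Bigr)\;\ge\;C\exp\!\bigl(-R_0(R(\infty)-R(t))\bigr),
\]
using $q(s)\le\overline q$, $\beta\overline q^2=R_0\kappa$ and $\kappa\int_t^\infty I=R(\infty)-R(t)$. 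It remains to check $\exp(-R_0(R(\infty)-R(t)))\ge\tfrac{1-\overline R(\infty)}{1-R(t)}$, i.e.\ $(1-R(t))e^{R_0R(t)}\ge(1-\overline R(\infty))e^{R_0R(\infty)}$: the map $x\mapsto(1-x)e^{R_0x}$ is unimodal and $R(t)$ ranges over $[0,R(\infty)]$, so it suffices to verify this at $x=0$ and $x=R(\infty)$, and both cases reduce to $S(\infty)\ge\overline S(\infty)=1-\overline R(\infty)$, which follows from the conservation law $\log(S(\infty)/S(0))=-\tfrac\beta\kappa\int_0^\infty q(s)^2\dot R(s)\,ds\ge -R_0R(\infty)$ and the analogous identity for the no-response benchmark. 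Strict inequality throughout comes from $q(s)<\overline q$ whenever $I(s)>0$ and from $r>0$.

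Claim (iii) and the main obstacle. For (iii) use the identity $C-\widetilde C(0)=CD(0)+(V_I(0)-V_S(0))$. Because $q(t)\ge(2a+\tfrac14 C\beta)^{-1}$ (from $\tfrac{SI}{S+I}\le\tfrac14$ and $\widetilde C\le C$), the prevalence grows at least exponentially until herd effects set in, so for $T(I(0)):=\tfrac{\alpha}{\kappa(R_0-1)}\log(1/I(0))$ with any fixed $\alpha\in(0,1)$ one has $\int_0^{T}I(s)\,ds=O(I(0)^{1-\alpha})\to 0$ and $e^{-rT}=I(0)^{r\alpha/(\kappa(R_0-1))}\to 0$; splitting $D(0)$ and $V_I(0)-V_S(0)$ at $T$ — the $[0,T]$ parts are $O(\int_0^TI)\to 0$ (small hazard, and $g(\overline q)-g(q(s))\le\tfrac12 R_0\kappa C\,I(s)$), the $[T,\infty)$ parts $O(e^{-rT})\to 0$ — gives $\widetilde C(0)\to C$. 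The main obstacle is exactly this: part~(i)'s lower bound only yields $\widetilde C(0)>(1-\overline R(\infty))C$, which does not tend to $C$ because the no-response final size $\overline R(\infty)$ stays bounded away from $0$, so (iii) genuinely needs a quantitative analysis of the early exponential-growth phase of the epidemic as $I(0)\to 0$; a secondary difficulty is discharging the monotone SIR comparison and the unimodality step in the lower bound rigorously.
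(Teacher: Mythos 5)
Your overall architecture is sound and, for the upper bound, part (ii), and part (iii), it matches the paper's: you decompose $\pi_I(t)-\pi_S(t)$ into the avoided expected discounted infection cost $CD(t)$ plus the flow-payoff gain from reaching immunity sooner, show both terms are nonnegative with the first strictly positive, and send both to zero as $t\to\infty$ or $I(0)\to 0$. Your part (iii) uses a horizon $T(I(0))\sim\log(1/I(0))$ and explicit exponential-growth bounds; the paper does the same job more simply by fixing $t_0$ first (so the tail past $t_0$ is discounted below $\epsilon/3$) and then shrinking $I(0)$ so that $I(t_0)+R(t_0)$ is small, but your version is correct and your diagnosis that (i) alone cannot deliver (iii) is right.

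There is one genuine gap, in the lower bound of (i): you assert that $\pi_S(t)$ is bounded below by the value, \emph{conditional on being susceptible}, of the feasible policy ``choose $\overline{q}$ forever.'' But $\pi_S(t)$ is not the value function of a control problem with known state $S$: individuals do not know whether they are susceptible or infected, so the equilibrium strategy is optimal only for the $S/I$ mixture, and a deviation could in principle raise the $S$-conditional payoff while lowering the $I$-conditional one. The revealed-preference inequality you need is $\frac{S}{S+I}(\pi_S-\pi_S^{dev})\geq \frac{I}{S+I}(\pi_I^{dev}-\pi_I)$; your bound is rescued because your particular deviation yields $\pi_I^{dev}=\pi_R=\frac{1}{4ar}\geq\pi_I$, so the right-hand side is nonnegative and $\pi_S\geq\pi_S^{dev}$ follows. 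This is precisely the step the paper's proof makes explicit (``this deviation would increase the continuation payoffs conditional on being infected, and therefore must decrease the continuation payoffs conditional on being susceptible''); you should add it.

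Granting that fix, your route to the lower bound is genuinely different from the paper's and also works, at the cost of more machinery. The paper mimics the action distribution of an infected individual and bounds the residual infection probability by $\frac{\overline{R}(\infty)-R(t)}{1-R(t)}$ directly via the coupling in Lemma~\ref{lem:bound}, landing exactly on the stated constant. You instead compute the survival probability $\exp(-\int_t^\infty\beta\overline{q}q(s)I(s)\,ds)\geq e^{-R_0(R(\infty)-R(t))}$ from the SIR conservation law and then must separately verify $e^{-R_0(R(\infty)-R(t))}\geq\frac{1-\overline{R}(\infty)}{1-R(t)}$ via the unimodality of $x\mapsto(1-x)e^{R_0x}$; both endpoint checks reduce to $R(\infty)\leq\overline{R}(\infty)$, which is again Lemma~\ref{lem:bound} (your alternative derivation from the two conservation identities needs the additional observation that $S(\infty)$ and $\overline{S}(\infty)$ both lie in the range where $x\mapsto\log x+R_0(1-x)$ is monotone, so it is cleaner to just cite the lemma). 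What your approach buys is an explicit, slightly sharper intermediate bound $\widetilde{C}(t)\geq Ce^{-R_0(R(\infty)-R(t))}$; what it costs is the detour through the final-size relation and the unimodality argument.
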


Part (i) of the proposition gives bounds on $\widetilde{C}(t)$. The continuation payoffs of infected individuals are always higher than the  continuation payoffs of susceptible individuals, so $\widetilde{C}(t)<C$. The lower bound on $\widetilde{C}(t)$ is derived by comparing the continuation payoffs at equilibrium to the continuation payoffs if there were no social distancing. The lower bound is expressed in terms on the current recovered population $R(t)$ and $\overline{R}(\infty)$, which can be computed in the standard SIR model.

Part (ii) says that late in the outbreak, most infections and social distancing have already occurred and so the direct infection costs dominate. Part (iii) says that the initial prevalence is very small, direct infection costs also dominate at the beginning of the outbreak. More generally, whether $\widetilde{C}(t)$ is large early in the outbreak will depend on the initial prevalence and the discount rate.

We next formulate our risk compensation results with forward looking agents. We study the impact of a temporary shock at time $t$, with the relevant parameter reverting at all times after $t$. For example, an unexpected change in temperature could create a brief temporary shock to the transmission rate. We assume that behavior continues to satisfy equation~(\ref{eq:actfl}) and that the equilibrium selection at times after $t$ is not affected by the shock.

The definition of high infection risk now replaces the infection cost $C$ with the endogenous cost $\widetilde{C}$. 

\begin{defin}
The \textbf{high infection risk} condition holds at time $t$ if $$\widetilde{C}(t) \beta \cdot \frac{S(t)I(t)}{I(t)+S(t)}>2 a.$$
\end{defin}
As in the myopic model, the left-hand side is larger when the susceptible and infected populations are larger. There is now an additional force, as $\widetilde{C}(t)$ is larger when there is less equilibrium social distancing and a lower likelihood of infection in the near future (see Proposition~\ref{prop:ctilde}).

The risk compensation results extend from the myopic case:

\begin{propquad}{thm:prevalenceIR}
Suppose $I(t)< S(t)$. Then $\left. \frac{\partial \iota^x(t)}{\partial x}\right|_{x=0}<0$
if and only if the high infection risk condition holds at time $t$.
\end{propquad}

\begin{propquad}{thm:policy}
The following are equivalent:
\begin{enumerate}[(1)]
\item The high infection risk condition holds at time $t$,
\item Marginally decreasing $\beta$ at time $t$  increases $\iota(t)$, and
\item Marginally decreasing $C$ at time $t$ increases $\widetilde{C}(t)\iota(t)$.
\end{enumerate}
\end{propquad}

The basic logic is the same as in the myopic case. The term $\widetilde{C}(t) \beta \cdot \frac{S(t)I(t)}{I(t)+S(t)}$ now captures the behavioral response to a temporary change. The assumption of an instantaneous shock that does not affect future play ensures that $\widetilde{C}(t)$ is unchanged. A lasting change, such as a more contagious variant, would also change $\widetilde{C}(t)$ by altering the continuation payoffs for susceptible and infected individuals. This would introduce an additional term involving the derivative of $\widetilde{C}(t)$.

The high infection risk condition in the two theorems includes the term $\widetilde{C}(t)$, which depends on the future equilibrium path. The bounds in Proposition~\ref{prop:ctilde} imply a necessary condition for high infection risk in terms of the population disease state and exogenous parameters.

\bibliographystyle{ecta}
\bibliography{Contagious}

\appendix
\section{Proofs of Theorems~\ref{thm:prevalenceIR} and~\ref{thm:policy}}

The proofs of Theorems~\ref{thm:prevalenceIR} and~\ref{thm:policy} are special cases of the proofs of Theorems~\ref{thm:prevalenceIR}$'$ and~\ref{thm:policy}$'$, which we provide here. The remaining proofs are given in Appendix~\ref{sec:remproofs}.

\begin{proof}[Proof of Theorem~\ref{thm:prevalenceIR}$'$]
Let $q_S^x(t)$ be the equilibrium action of susceptible and infected individuals given $(S^x(t),I^x(t),R^x(t))$. As $x$ varies, $S^x(t)+I^x(t)$ and $R^x(t)$ remain constant while $\left. \frac{\partial \left(S^x(t)I^x(t)\right)}{\partial x}\right|_{x=0}=S(t)-I(t)$.

The infection rate is $\iota^x(t) = q_S^x(t)^2S^x(t)I^x(t)\beta$, and we want to compute the derivative of the infection rate in $x$. Differentiating, $$ \left.\frac{\partial \iota^x(t) }{\partial x}\right|_{x=0}= q_S(t)^2 (S(t)-I(t))\beta+ 2q_S(t)\left(\left.\frac{\partial q_S^x(t) }{\partial x}\right|_{x=0}\right) S(t)I(t) \beta.$$
The right-hand side has the same sign as $q_S(t)(S(t)-I(t)) + 2 \left(\left.\frac{\partial q_S^x(t) }{\partial x}\right|_{x=0}\right) S(t)I(t).$

Recall equation~(\ref{eq:payoffcompqs}), which says 
$$q^x_S(t) = \frac{1+\frac{b R^x(t)}{2a-b}}{2a-b(S^x(t)+I^x(t))(1+\frac{b R^x(t)}{2a-b}) + C\beta \cdot \frac{S^x(t)I^x(t)}{S^x(t)+I^x(t)}}.$$
Differentiating this expression in $x$,
$$\left(\left.\frac{\partial q_S^x(t) }{\partial x}\right|_{x=0}\right) =- \frac{ C \beta \cdot  \frac{S(t)-I(t)}{S(t)+I(t)} \cdot \left(1+\frac{b R(t)}{2a-b}\right)}{\left(2a-b(S(t)+I(t))\left(1+\frac{b R(t)}{2a-b}\right) + C\beta \cdot \frac{S(t)I(t)}{S(t)+I(t)}\right)^2}.$$
Substituting shows that $q_S(t)(S(t)-I(t)) + 2 \left(\left.\frac{\partial q_S^x(t) }{\partial x}\right|_{x=0}\right)  S(t)I(t)$ is equal to
\begin{equation}\label{eq:exptosignpc} \frac{(S(t)-I(t))q_S(t)\cdot \left(  2a-b(S(t)+I(t))\left(1+\frac{b R(t)}{2a-b}\right)-C \beta \cdot \frac{S(t)I(t)}{I(t)+S(t)} \right) }{2a-b(S(t)+I(t))\left(1+\frac{b R(t)}{2a-b}\right) + C\beta \cdot \frac{S(t)I(t)}{S(t)+I(t)}} .\end{equation}
Suppose $I(t) < S(t)$. The denominator is positive by equation~(\ref{eq:payoffcompqs}), since $q_S(t) > 0$. So expression~(\ref{eq:exptosignpc}) is negative if and only if $$C \beta \cdot \frac{S(t)I(t)}{I(t)+S(t)}>  2a-b(S(t)+I(t))\left(1+\frac{b R(t)}{2a-b}\right),$$ which is the definition of high infection risk at time $t$.
\end{proof}

\begin{proof}[Proof of Theorem~\ref{thm:policy}$'$]
We will again use our formula for the infection rate $\iota(t) = q_S(t)^2S(t)I(t)\beta$ and the characterization of $q_S(t)$ from equation~(\ref{eq:payoffcompqs}).

(1) $\Leftrightarrow$ (2): Differentiating the infection rate, $$ \frac{\partial \iota(t) }{\partial \beta}= q_S(t)^2 S(t)I(t)+ 2q_S(t)\frac{\partial q_S(t) }{\partial \beta} S(t)I(t) \beta.$$
The right-hand side has the same sign as $q_S(t) + 2 \frac{\partial q_S(t) }{\partial \beta} \beta.$ Differentiating equation~(\ref{eq:focgen}),
$$\frac{\partial q_S(t) }{\partial \beta} = - \frac{ C  \cdot  \frac{S(t)I(t)}{S(t)+I(t)} \cdot \left(1+\frac{b R(t)}{2a-b}\right)}{\left(2a-b(S(t)+I(t))\left(1+\frac{b R(t)}{2a-b}\right) + C\beta \cdot \frac{S(t)I(t)}{S(t)+I(t)}\right)^2}.$$
Substituting, 
\begin{equation}\label{eq:exptosignpcpolicy}q_S(t) + 2 \frac{\partial q_S(t) }{\partial \beta} \beta = \frac{q_S(t)\left(  2a-b(S(t)+I(t))\left(1+\frac{b R(t)}{2a-b}\right)-C \beta \cdot \frac{S(t)I(t)}{I(t)+S(t)} \right) }{2a-b(S(t)+I(t))\left(1+\frac{b R(t)}{2a-b}\right) + C\beta \cdot \frac{S(t)I(t)}{S(t)+I(t)}} .\end{equation}
The denominator is positive by equation~(\ref{eq:payoffcompqs}), since $q_S(t) > 0$. So expression~(\ref{eq:exptosignpcpolicy}) is negative if and only if $$C \beta \cdot \frac{S(t)I(t)}{I(t)+S(t)}>  2a-b(S(t)+I(t))\left(1+\frac{b R(t)}{2a-b}\right),$$ which is the definition of high infection risk at time $t$.

(2) $\Leftrightarrow$ (3): Because $C$ is constant, the infection rate $\iota(t)$ is decreasing in $\beta$ if and only if $C\iota(t)$ is decreasing in $\beta$. In our expression for $C\iota(t)$, the terms $C$ and $\beta$ only appear within the product $C\beta$. So $C\iota(t)$ is decreasing in $\beta$ if and only if it is decreasing in $C$.
\end{proof}

\newpage

\setcounter{page}{1}

\begin{center} {\Large Online Appendix} \end{center}

\section{Basic Results}\label{sec:basic}

We begin with a basic result about the long-run population.

\begin{prop}\label{prop:limits}
Suppose that $S(0)>0$ and $I(0)>0$. Then $\lim_{t\rightarrow \infty} I(t) = 0$ and $\lim_{t\rightarrow \infty} R(t)\in (0,1)$.
\end{prop}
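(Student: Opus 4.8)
The plan is to establish the two limiting claims largely independently, using monotonicity of $S$ and $R$ plus a boundedness-and-integrability argument for $I$.

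\textbf{Step 1: Monotone state variables and their limits.} Observe from the population dynamics that $\dot S(t) = -q(t)^2 \beta S(t) I(t) \le 0$ and $\dot R(t) = \kappa I(t) \ge 0$, so $S$ is nonincreasing and $R$ is nondecreasing. Both are bounded in $[0,1]$, hence $S(t) \to S_\infty \ge 0$ and $R(t) \to R_\infty \le 1$ as $t \to \infty$. Since $S+I+R \equiv 1$, it follows that $I(t) \to I_\infty := 1 - S_\infty - R_\infty$. Also $S(t) \ge S_\infty$ and, since $I(0) > 0$ and $R(0) = 0$, a standard argument (e.g.\ writing $I(t) = I(0)\exp\!\int_0^t (q(s)^2\beta S(s) - \kappa)\,ds$, which is well-defined because the integrand is continuous and bounded) shows $I(t) > 0$ for all finite $t$; similarly $S(t) > 0$ for all finite $t$ since $S(0) > 0$.

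\textbf{Step 2: $I_\infty = 0$.} The key is that $\dot R(t) = \kappa I(t)$ together with $R$ bounded forces $\int_0^\infty I(t)\,dt = R_\infty/\kappa < \infty$. If $I_\infty > 0$ this integral would diverge, a contradiction; but to conclude rigorously I need $I$ to actually converge to its limit rather than merely have $I_\infty$ as a candidate — which Step 1 already gives, since $I = 1 - S - R$ is a difference of convergent functions. Hence $I_\infty = 0$, i.e.\ $\lim_{t\to\infty} I(t) = 0$.

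\textbf{Step 3: $R_\infty \in (0,1)$, i.e.\ the two strict inequalities.} The lower bound is immediate: $R$ is nondecreasing and $\dot R(0) = \kappa I(0) > 0$, so $R_\infty > 0$. The upper bound $R_\infty < 1$ is the substantive part. Suppose for contradiction $R_\infty = 1$; then $S_\infty = 0$ and $I_\infty = 0$. From \eqref{eq:activity}, $q(t) \ge q_0 := \frac{1}{2a + C\beta/4} > 0$ uniformly in $t$ (using $\frac{I S}{S+I} \le \frac{S+I}{4} \le \frac14$). Now examine $\frac{d}{dt}\log S(t) = -q(t)^2\beta I(t) \ge -\beta I(t)$ (bounding $q \le \overline q$ is not even needed here), so $\log S(\infty) - \log S(0) = -\beta\int_0^\infty q(t)^2 I(t)\,dt \ge -\beta \overline q^2 \int_0^\infty I(t)\,dt = -\beta\overline q^2 R_\infty/\kappa$, which is finite. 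Thus $S_\infty \ge S(0)\exp(-\beta\overline q^2/\kappa) > 0$, contradicting $S_\infty = 0$. Therefore $R_\infty < 1$, completing the proof.

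\textbf{Main obstacle.} The only real work is Step 3's upper bound, and the crux is recognizing that $\int_0^\infty I(t)\,dt$ is finite (from $R$ being bounded) and that $\log S$ changes by at most a constant multiple of this integral, so a positive fraction of the susceptible population is never infected. Everything else is bookkeeping with the conservation identity $S + I + R \equiv 1$ and signs of the derivatives; I would take care to justify positivity of $S(t)$ and $I(t)$ at finite times before dividing by them or taking logs.
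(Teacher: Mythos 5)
Your proof is correct, but it takes a genuinely different and more elementary route than the paper's, especially for the hard part $R(\infty)<1$. The paper proves $I(t)\to 0$ by a counting argument (a subsequence with $I(t_i)>\epsilon$ would force at least $\delta$ recoveries per unit interval infinitely often, exceeding the unit population), whereas you get it from the integrability $\int_0^\infty I\,dt=(R(\infty)-R(0))/\kappa<\infty$ combined with the fact that $I=1-S-R$ converges because $S$ and $R$ are monotone and bounded; your version is cleaner and also delivers convergence of $I$ rather than just a limit along the whole sequence. For $R(\infty)<1$ the paper goes through a coupling lemma (infection chains, showing $R(\infty)\le\overline{R}(\infty)$ for the no-response SIR model) followed by a ratio argument $S/I\to\infty$ and a geometric-series bound on future infections; you instead use the classical final-size identity, integrating $\tfrac{d}{dt}\log S=-q^2\beta I\ge-\overline{q}^2\beta I$ against $\int_0^\infty I\,dt\le 1/\kappa$ to get the explicit bound $S(\infty)\ge S(0)e^{-R_0}>0$. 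Your argument is shorter and quantitative; the paper's extra machinery (Lemma~\ref{lem:bound}) is not wasted there, since it is reused in the proofs of Corollary~\ref{cor:singlepeaked} and Proposition~\ref{prop:ctilde}. One cosmetic slip: the displayed inequality $-q(t)^2\beta I(t)\ge-\beta I(t)$ and the parenthetical claim that $q\le\overline{q}$ is ``not even needed'' are wrong unless $\overline{q}\le 1$ (i.e.\ $a\ge\tfrac12$); but your very next line correctly inserts $\overline{q}^2$, so the conclusion stands. Also, the uniform lower bound $q(t)\ge q_0>0$ you introduce at the start of Step~3 is never actually used and can be deleted.
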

The proposition says that the disease prevalence eventually approaches zero and that some individuals will never be infected. While contagion vanishes due to herd immunity, the eventual fraction of the population who has had the disease depends on behavior and any policy interventions. We define $R(\infty) = \lim_{t\rightarrow \infty} R(t)$ to be the fraction of the population that is eventually infected.

Examples show that  behavioral responses can have a large effect on $R(\infty)$. To see this, consider an example with reproductive number $R_0 = 2.5$ and average disease length of $14$ days (see Section~\ref{sec:cal} for details on the choice of parameter values). The eventual fraction infected $R(\infty)$ is $88\%$ with $C=0$ (no behavioral response) but drops to $64\%$ when the health cost $C$ from infection is equal to the value of three months of normal activity and to $61\%$ with $C$ equal to the value of six months of normal activity (Figure~\ref{fig:Cdiff}). Thus, the model with endogenous behavior predicts a substantially smaller share of the population will eventually be  infected than the standard SIR model would suggest.

\begin{figure}
\center{\includegraphics[scale=.55]{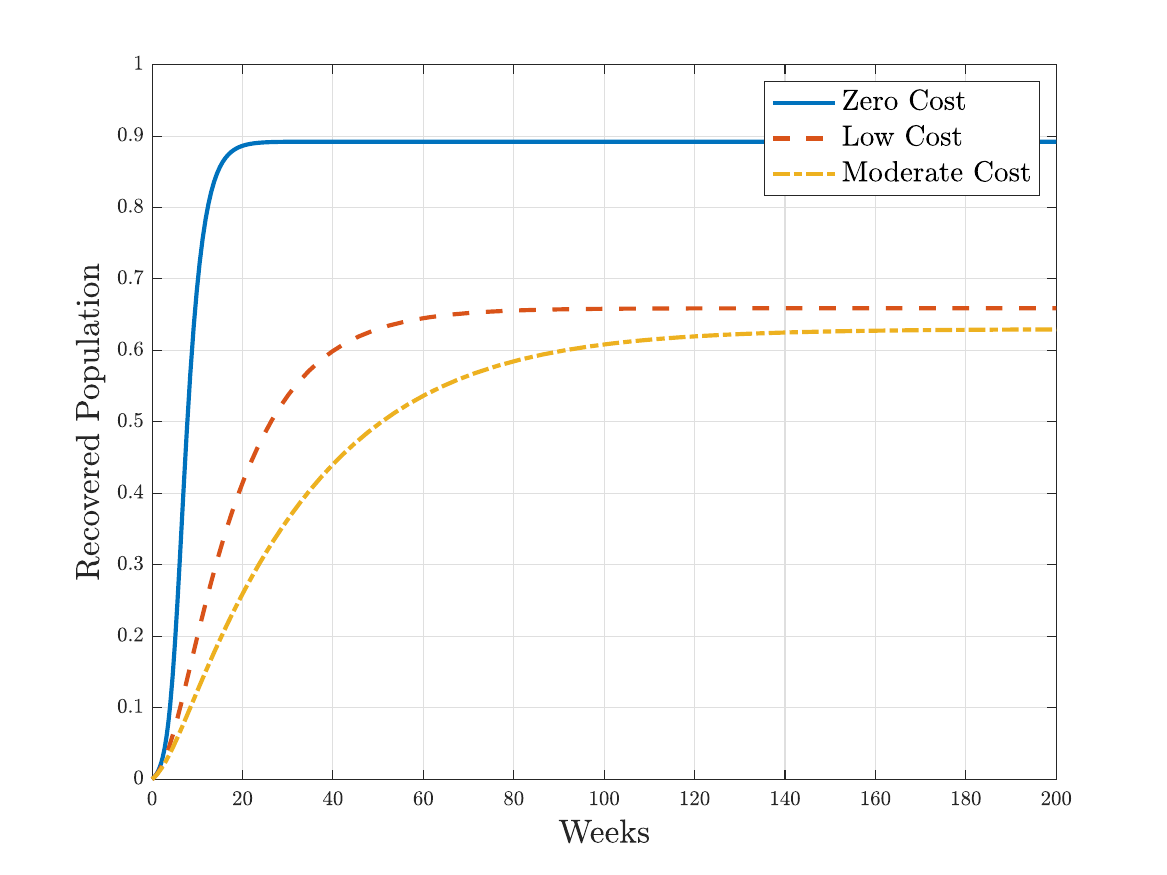}}

\caption{Recovered population over time for infection costs $C$ equal to zero (solid blue), equal to the value of three months of normal activity (dashed red), and equal to the value of six months of normal activity (dash-dotted yellow). The reproductive number is $R_0=2.5$ and the average disease length is $14$ days.}\label{fig:Cdiff}
\end{figure}

Behavioral responses also matter for the speed of convergence since social distancing slows down herd immunity effects. In Figure~\ref{fig:Cdiff}, almost all infections occur within three months with zero infection costs while almost all infections occur within two years with moderate infection costs.

We will discuss how shocks or policy changes affect $R(\infty)$ below. Before doing so, we describe the disease path.


\begin{prop}\label{prop:singlepeaked}
For initial disease prevalence $I(0)>0$ sufficiently small, if $I(t) \leq S(t)$ for all $t< t^*$ where $t^* \leq \infty$, then the prevalence $I(t)$ is single-peaked on the interval $[0,t^*]$.
\end{prop}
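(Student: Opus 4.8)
The plan is to reduce single-peakedness to a sign analysis of the \emph{growth factor} $h(t):=\beta S(t)q(t)^2-\kappa$. The dynamics give $\dot I(t)=I(t)\bigl(\beta S(t)q(t)^2-\kappa\bigr)=I(t)\,h(t)$, and $I(t)>0$ for all $t$ (since $I(0)>0$ and $I(t)=I(0)\exp\int_0^t h$), so $I$ is single-peaked on $[0,t^*]$ exactly when $h$ changes sign at most once, from positive to negative. First I would record two easy facts. (i) For $I(0)$ small, $S(0)=1-I(0)\to1$ and $x(0):=\tfrac{I(0)S(0)}{I(0)+S(0)}\to0$, so by \eqref{eq:activity} $q(0)\to\overline q$ and hence $h(0)\to\beta\overline q^2-\kappa=\kappa(R_0-1)>0$ by Assumption~\ref{assumption}; thus $h(0)>0$ and $I$ is initially increasing. (ii) Since $q$ is a smooth function of $(S,I)$ on $\{S,I>0\}$ and $S,I$ are $C^1$, the function $h$ is $C^1$ in $t$.

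The heart of the argument is a transversality claim: \emph{every zero $\tau\in(0,t^*)$ of $h$ satisfies $\dot h(\tau)<0$.} To prove it I would differentiate $h$ using $\dot S=-\beta S I q^2$, the identity $1/x=1/I+1/S$ for $x:=\tfrac{IS}{I+S}$ (so $\dot x=\tfrac{S^2\dot I+I^2\dot S}{(I+S)^2}$), and $\dot q=-C\beta q^2\dot x$ from \eqref{eq:activity}. A zero of $h$ is exactly a critical point of $I$, so $\dot I(\tau)=0$; substituting this into $\dot x$ and collecting terms yields
\[
\dot h(\tau)\;=\;\beta^2 S I q^4\!\left(\frac{2C\beta S I^2 q}{(I+S)^2}-1\right)\Big|_{t=\tau}.
\]
Now $C\beta q=\tfrac{C\beta}{2a+C\beta x}<\tfrac1x=\tfrac{I+S}{IS}$ (strict because $a>0$; the case $C=0$ is immediate), so the parenthesis is strictly less than $\tfrac{2I}{I+S}-1$, which is $\le0$ precisely because $I(\tau)\le S(\tau)$ by hypothesis. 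Hence $\dot h(\tau)<0$. This is the step I expect to be the main obstacle: not the differentiation, which is routine, but organizing the bookkeeping so the bound closes, and noticing that the assumption $I\le S$ is exactly what makes $\tfrac{2I}{I+S}\le1$.

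Finally I would convert transversality into the conclusion. Since $h$ strictly decreases through each of its zeros, the zeros are isolated; and between two consecutive zeros $h$ would pass from negative (just after the first) to positive (just before the second, where $\dot h<0$), forcing an intermediate zero — a contradiction. So $h$ has at most one zero in $(0,t^*)$. Combined with $h(0)>0$ (continuity and the intermediate value theorem rule out $h$ dipping negative before its first zero), there are two cases: either $h>0$ on all of $[0,t^*)$, so $I$ is strictly increasing with its peak at $t^*$ (this case cannot occur when $t^*=\infty$, by Proposition~\ref{prop:limits}); or $h$ has a unique zero $t_{\mathrm{peak}}\in(0,t^*)$ with $h>0$ on $[0,t_{\mathrm{peak}})$ and $h<0$ on $(t_{\mathrm{peak}},t^*)$, so $\dot I>0$ then $\dot I<0$. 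In either case $I$ is single-peaked on $[0,t^*]$.
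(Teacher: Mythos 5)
Your proof is correct and follows essentially the same strategy as the paper's: establish that $\dot I(0)>0$ under Assumption~\ref{assumption}, then show a transversality property at every critical point of $I$ (your $\dot h(\tau)<0$ is exactly the paper's $\ddot I(\tau)<0$, since $\ddot I=I\dot h$ there), with the sign closing precisely because $I\le S$ and $a>0$. The only cosmetic difference is that you differentiate the growth factor $h=\beta Sq^2-\kappa$ directly in time, whereas the paper computes the derivative of the infection rate under a shift of mass from $S$ to $R$; both reduce to the same inequality $2a+C\beta\,\frac{S(t)I(t)(S(t)-I(t))}{(S(t)+I(t))^2}>0$.
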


\begin{figure}
\center{\includegraphics[scale=.7]{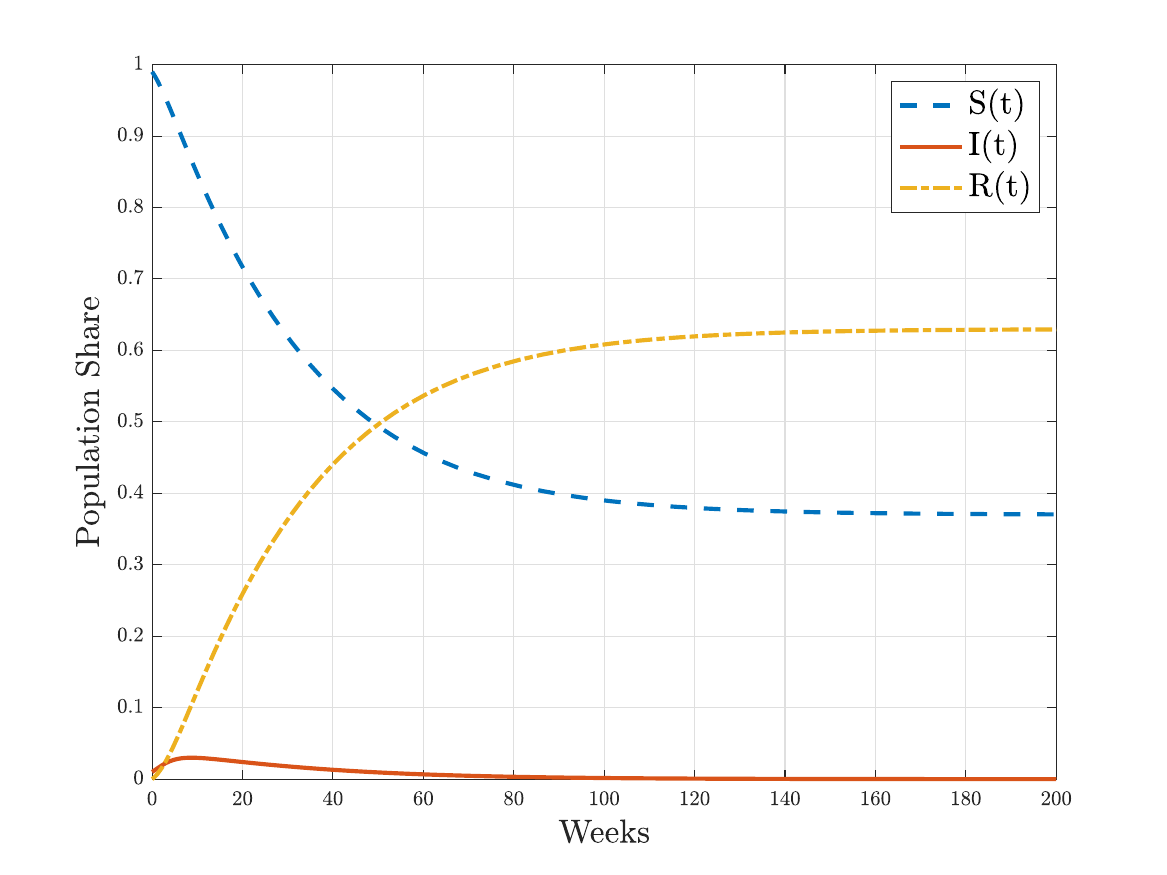}

\includegraphics[scale=.7]{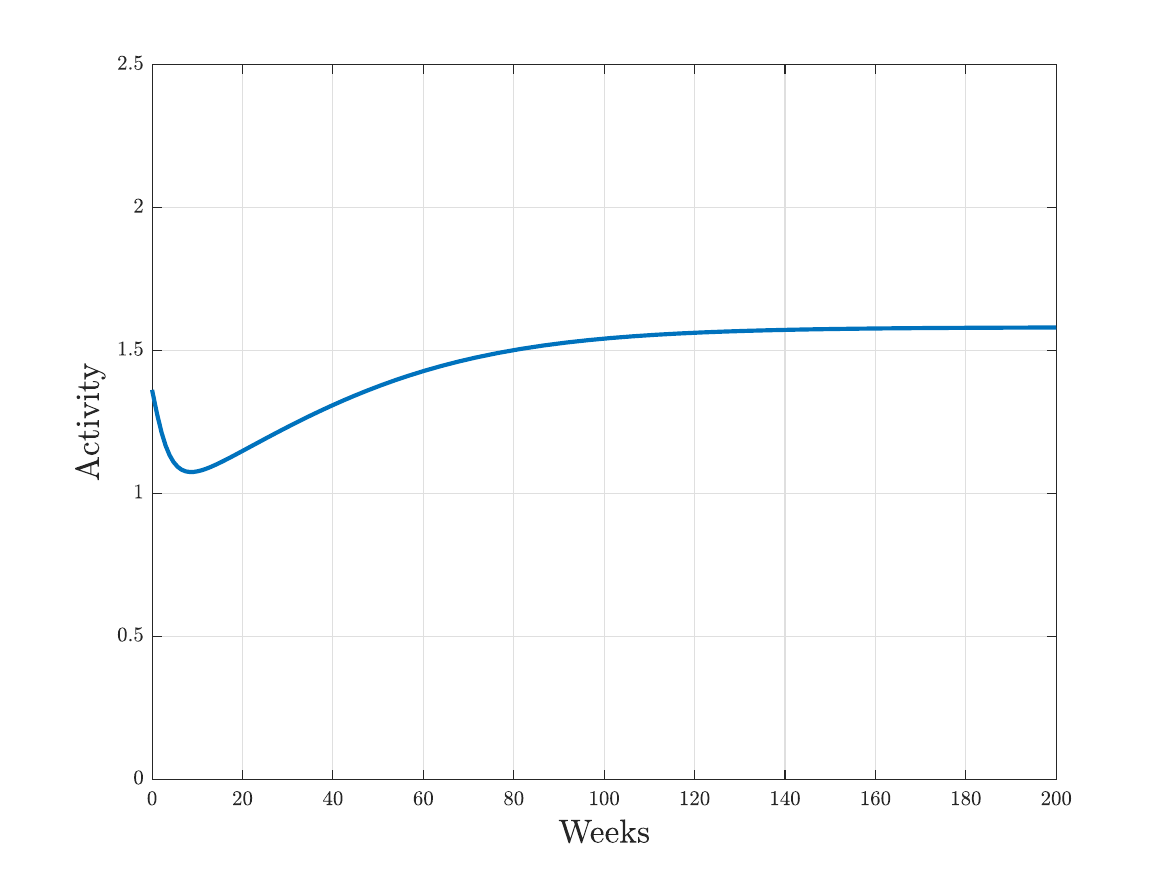}}

\caption{Basic dynamics of the population and behavior with $I(t)$ single-peaked. Graphs show the time paths of $S(t),$ $I(t)$, and $R(t)$ (top) and $q(t)$ (bottom), with $R_0 = 2.5$, average disease length $14$ days, and infection cost equal to the value of six months of normal activity.}\label{fig:basic}
\end{figure}

The proposition says that as long as the infected population is not larger than the susceptible population, the prevalence of the disease is single peaked. Figure~\ref{fig:basic} shows the path of the disease and behavior in a simple example.

If the population infected $I(t)$ is much larger than the susceptible population $S(t)$, then the prevalence may increase again late in the epidemic. The intuition is that in this case, most individuals who have not yet recovered are unlikely to be susceptible. This mitigates infection risk and leads to more interaction. The proof gives a necessary condition for the increased interaction to increase disease prevalence (see Section~\ref{sec:risk} for similar effects); the condition requires $I(t) > S(t)$ but is actually substantially stronger. This effect seems unlikely to occur in practice, as it would require unusually large prevalences for a severe disease.

There are several more plausible explanations for multiple waves in our model. First, temporary lockdowns which restrict actions $q(t)$ to be below some exogenous level will generate a second wave if they are sufficiently strict and lifted sufficiently early. Second, if people respond to the population disease state at a lag (i.e., at time $t$ all individuals act as if the current disease state is $(S(t-\tau),I(t-\tau),R(t-\tau))$ for some $\tau>0$), then the prevalence need not be single-peaked.

The condition $I(t) \leq S(t)$ in Proposition~\ref{prop:singlepeaked} depends on the endogenous parameters $I(t)$ and $S(t)$. We can also state the single-peakedness result in terms of exogenous parameters, such as the disease cost and transmission rate.
\begin{cor}\label{cor:singlepeaked}
There exist constants $0<\underline{\delta} < \overline{\delta}$ such that if $C\beta<\underline{\delta}$ or $C\beta>\overline{\delta}$, then the prevalence $I(t)$ is single-peaked for $I(0)>0$ sufficiently small.
\end{cor}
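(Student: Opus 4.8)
The plan is to control, in both parameter regimes, the sign of $\psi(t):=q(t)^2\beta S(t)-\kappa$, since $\dot{I}(t)=I(t)\psi(t)$ with $I(t)>0$ for every finite $t$. For $I(0)$ small we have $q(0)\approx\overline{q}$ and $S(0)\approx 1$, so $\psi(0)\approx\overline{q}^2\beta-\kappa=\kappa(R_0-1)>0$ by Assumption~\ref{assumption}, while $I(t)\to 0$ (Proposition~\ref{prop:limits}) forces $\psi(t)<0$ for large $t$. Thus $I$ is single-peaked precisely when $q(t)^2\beta S(t)$, once it drops below $\kappa$, never returns above $\kappa$; equivalently, when $t\mapsto q(t)^2S(t)$ does not rebound past $\kappa/\beta$.

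For large $C\beta$ I would deduce the claim from Proposition~\ref{prop:singlepeaked} by showing $I(t)\le S(t)$ for all $t$ (then apply it with $t^*=\infty$). Since $I(0)<S(0)$, a violation would give a first time $t_1$ with $I(t_1)=S(t_1)=:s\in(0,1]$, at which $\frac{d}{dt}(I-S)=I\,(2q^2\beta S-\kappa)\ge 0$. On the diagonal $m=\frac{IS}{I+S}=\tfrac s2$, hence $q=(2a+C\beta s/2)^{-1}$, so this forces $(2a+C\beta s/2)^2\le 2\beta s/\kappa$, i.e.\ (using $\beta/\kappa=4a^2R_0$) $(2a+C\beta s/2)^2\le 8a^2R_0s$. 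But the left side is at least $4a^2+2aC\beta s$, and $4a^2+2aC\beta s>8a^2R_0s$ for every $s\in(0,1]$ once $C\beta>\overline{\delta}:=2a(2R_0-1)$. So for such $C\beta$ the vector field points strictly into $\{I<S\}$ along the diagonal, $\{0<I<S\}$ is forward invariant, and Proposition~\ref{prop:singlepeaked} applies.

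For small $C\beta$, Proposition~\ref{prop:singlepeaked} is \emph{not} available, because $I(t)>S(t)$ can occur near the peak (for instance when $R_0$ is large). Instead I would prove directly that $\frac{d}{dt}(q(t)^2S(t))<0$ whenever $\dot{I}(t)\le 0$; this suffices, since if $q^2\beta S$ were to climb back to $\kappa$ at a first time $t_1$ after the first zero $t_0$ of $\psi$, then $\dot{I}<0$ on $(t_0,t_1)$, so $q^2S$ is strictly decreasing there, contradicting $q^2S(t_1)=\kappa/\beta=q^2S(t_0)$ — hence $\psi<0$ on $(t_0,\infty)$, $I$ rises on $[0,t_0]$ and falls afterward. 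For the derivative bound, differentiating $q=(2a+C\beta m)^{-1}$ with $m=\frac{IS}{I+S}$ gives $\frac{d}{dt}(q^2S)=-2q^3C\beta\,\dot{m}\,S+q^2\dot{S}$ and $\dot{m}=\frac{\dot{I}S^2+\dot{S}I^2}{(I+S)^2}$. When $\dot{I}\le 0$ both numerator terms are nonpositive, and the estimates $|\dot{I}|\le\kappa I$, $|\dot{S}|=q^2\beta SI\le R_0\kappa I$, together with the elementary bound $\frac{S^2+R_0I^2}{(S+I)^2}\le R_0$ (equivalent to $1\le R_0+2R_0I/S$, true since $R_0>1$), yield $-\dot{m}\le R_0\kappa I$ and hence $\frac{d}{dt}(q^2S)\le q^3\beta SI\,(2R_0\kappa C-q)$. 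Since $q\ge(2a+C\beta/4)^{-1}$ and $2R_0\kappa C=C\beta/(2a^2)$, this is $<0$ as long as $C\beta<\underline{\delta}$ for a small explicit $\underline{\delta}$ of order $a$; shrinking $\underline{\delta}$ further secures $\underline{\delta}<\overline{\delta}$.

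The crux is the small-$C\beta$ argument: it genuinely requires showing that the behavioral rebound (once prevalence falls, $q(t)$ rises) cannot re-ignite the epidemic, which is the sign analysis of $\frac{d}{dt}(q^2S)$ rather than an appeal to Proposition~\ref{prop:singlepeaked}. The one nonobvious ingredient is the uniform bound $\frac{S^2+R_0I^2}{(S+I)^2}\le R_0$, which keeps that estimate clean and independent of the disease state; the remaining pieces — the diagonal-crossing computation for large $C\beta$, the bookkeeping around the first zero of $\psi$, and the invariant-region argument — are routine.
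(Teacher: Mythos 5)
Your argument is correct, and both halves take a genuinely different route from the paper's. For $C\beta$ large, the paper shows that $I(t)$ never exceeds a small threshold $\delta$ below half of the uniform lower bound on $S(t)+I(t)$ (obtained from the no-behavior SIR comparison), so that $I<S$ always and Proposition~\ref{prop:singlepeaked} applies; your forward-invariance computation on the diagonal $I=S$ reaches the same conclusion more directly and buys an explicit threshold $\overline{\delta}=2a(2R_0-1)$, whereas the paper's threshold is implicit. For $C\beta$ small, the paper does \emph{not} route through the hypothesis $I\le S$ either: it reuses the computation from the proof of Proposition~\ref{prop:singlepeaked}, observing that at any critical point of $I$ one has $\ddot{I}<0$ as long as $2a+C\beta\cdot\frac{S(t)(S(t)-I(t))}{(S(t)+I(t))^2}>0$, and this holds uniformly once $C\beta$ is small because the fraction is bounded (again using the lower bound on $S+I$). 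Your alternative — showing $\frac{d}{dt}\bigl(q(t)^2S(t)\bigr)<0$ whenever $\dot{I}\le 0$, so that $q^2\beta S$ cannot climb back to $\kappa$ after its first downcrossing — is a valid monotonicity argument, and your intermediate estimates ($-\dot{m}\le R_0\kappa I$ via $\frac{S^2+R_0I^2}{(S+I)^2}\le R_0$, and $q\ge(2a+C\beta/4)^{-1}$ versus $2R_0\kappa C=C\beta/(2a^2)$) all check out, yielding a $\underline{\delta}$ of order $a$ that is independent of $\kappa$. The trade-off is that your small-$C\beta$ half is somewhat more computational, while the paper's is a two-line corollary of work already done in Proposition~\ref{prop:singlepeaked}; conversely, your version makes both constants explicit and makes transparent exactly which rebound effect (the rise of $q(t)$ as prevalence falls) must be ruled out.
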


The single-peaked prevalence result holds for $C\beta$ sufficiently small or sufficiently large. The threshold levels $\underline{\delta}$ and $\overline{\delta}$ for $C\beta$ may depend on $a$ and $\kappa$. We could also have formulated the corollary for other parameters, i.e., the prevalence is single-peaked when $a$ or $\kappa$ are sufficiently large.

\section{Calibrations with General Payoffs}\label{sec:calgen}

In this section, we repeat the calibration exercise from Section~\ref{sec:cal} with more general payoffs. We take the flow payoffs from economic activity to be $q-aq^{\alpha}$ for $\alpha>1$. Section~\ref{sec:gen} showed that the high infection risk condition is then $$ C\beta \cdot \frac{I(t)S(t)}{S(t)+I(t)}>a\alpha(\alpha-1)q^{\alpha-2}.$$

Figures~\ref{fig:lowalpha} and~\ref{fig:highalpha} conduct simulations for $\alpha=1.5$ and $\alpha=3$. As in Section~\ref{sec:cal}, we consider baseline reproductive numbers $R_0^{low}=2.5$ and $R_0^{high}=7.0$ with average contagious period of $14$ days and infection cost equal to the value of six months of normal activity. Each figure shows the prevalence $I(t)$ in dashed  blue and $ C\beta \cdot \frac{I(t)S(t)}{S(t)+I(t)}-a\alpha(\alpha-1)q^{\alpha-2}$ in solid red. The high infection risk condition holds when the solid red curve is positive.

\begin{figure}[!tbp]
\centering
\begin{minipage}[b]{0.4\textwidth}
\includegraphics[scale=.4]{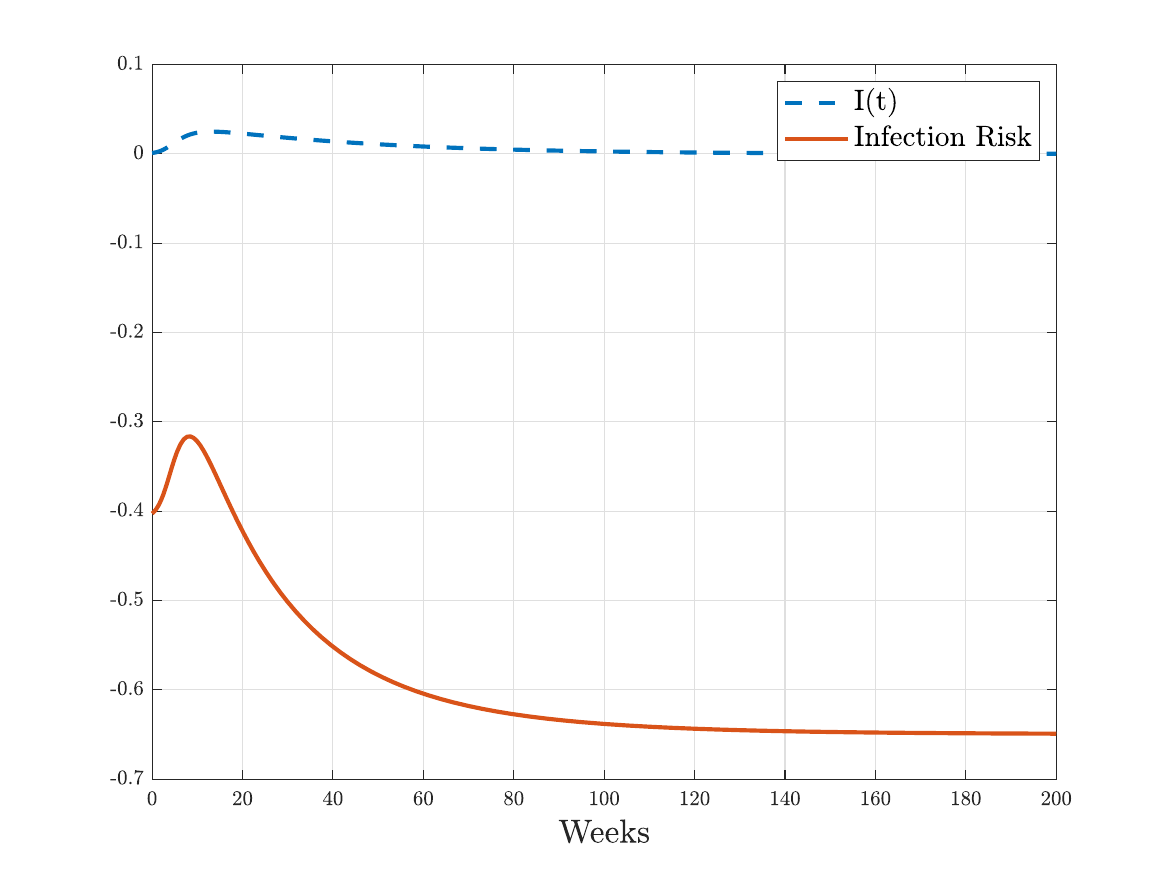}
\end{minipage}
\hfill
\begin{minipage}[b]{0.4\textwidth}
\includegraphics[scale=.4]{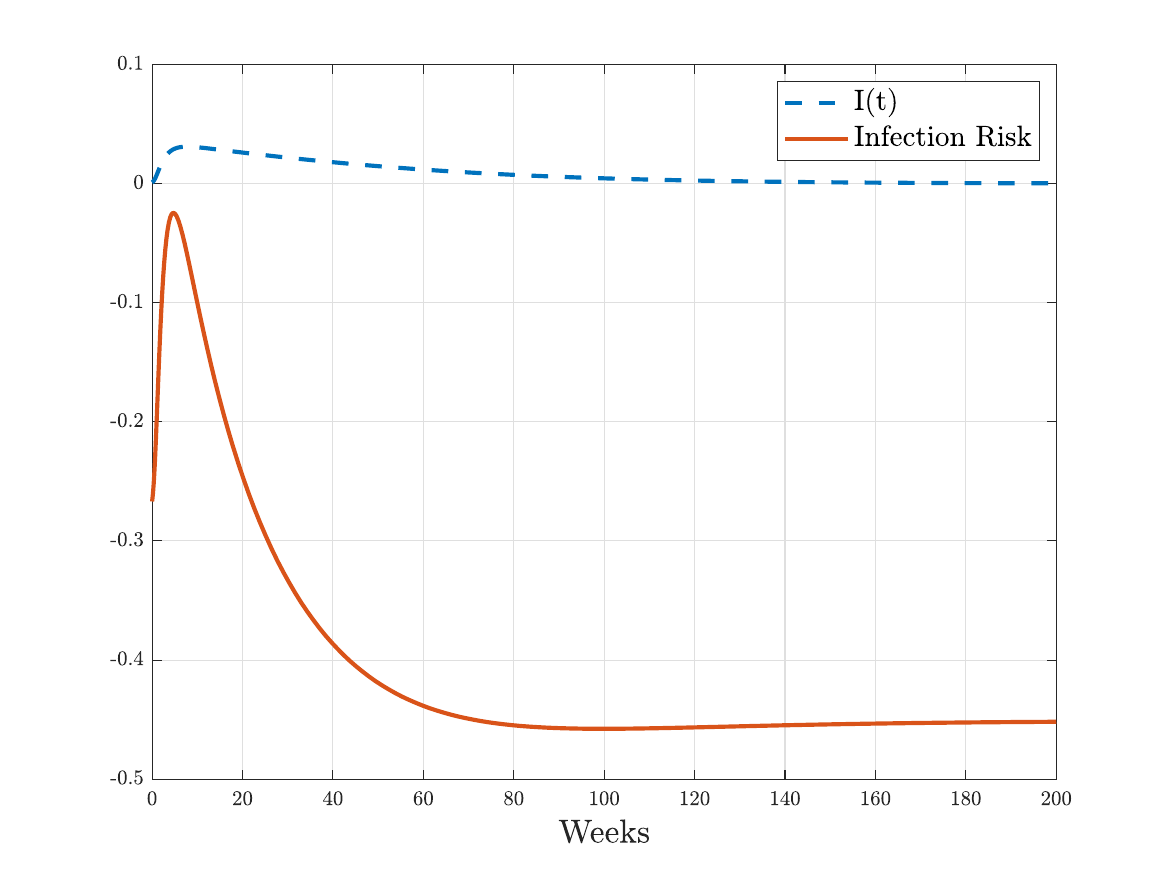}
\end{minipage}
\caption{Prevalence and infection risk with economic payoffs $q-aq^{\frac32}$. Time path of $I(t)$ in dashed blue and $\frac34 a q^{-\frac12}-C \beta \cdot \frac{S(t)I(t)}{I(t)+S(t)}$ in solid red for $R_0^{low}=2.5$ (left) and $R_0^{high}=7.0$ (right) with flow payoffs $q-aq^{3/2}$. The high infection risk condition holds when the value of the solid red curve is positive, which does not occur.}\label{fig:lowalpha}
\end{figure}

\begin{figure}[!tbp]
\centering
\begin{minipage}[b]{0.4\textwidth}
\includegraphics[scale=.4]{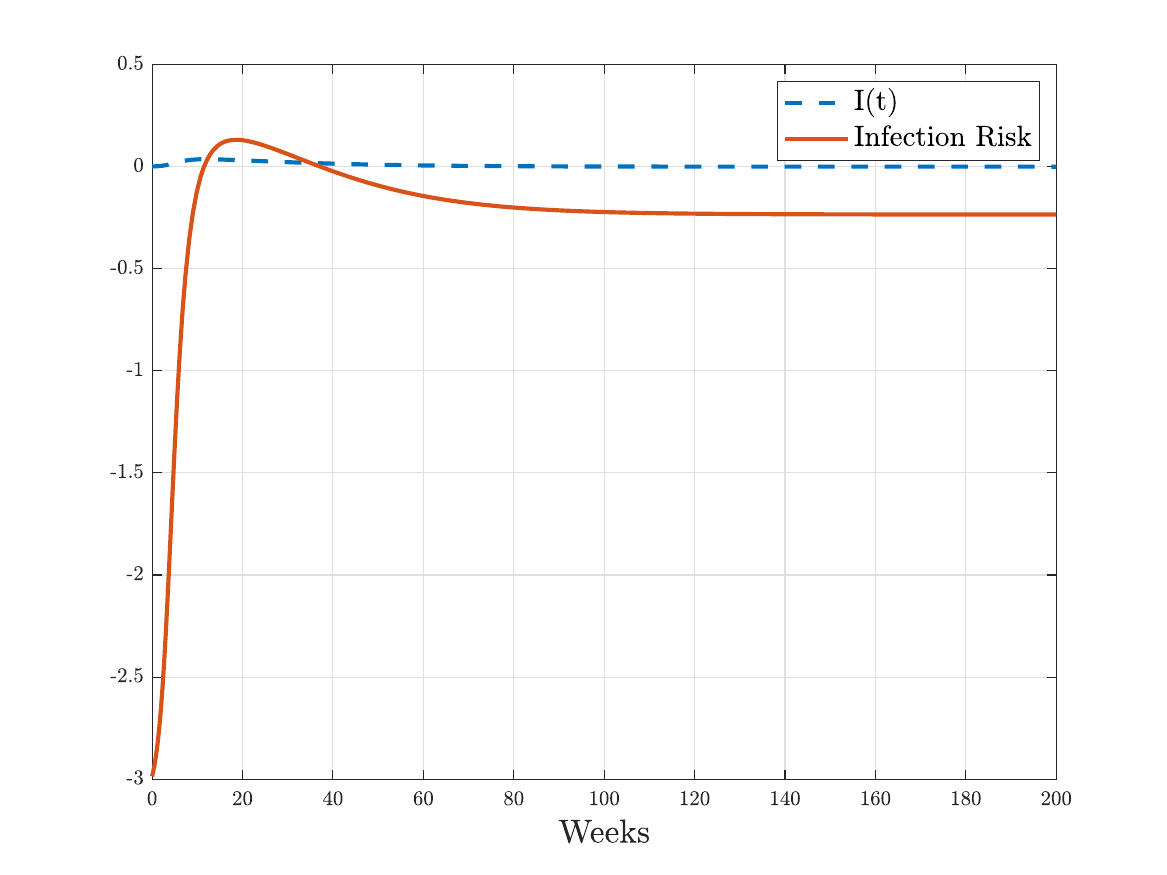}
\end{minipage}
\hfill
\begin{minipage}[b]{0.4\textwidth}
\includegraphics[scale=.4]{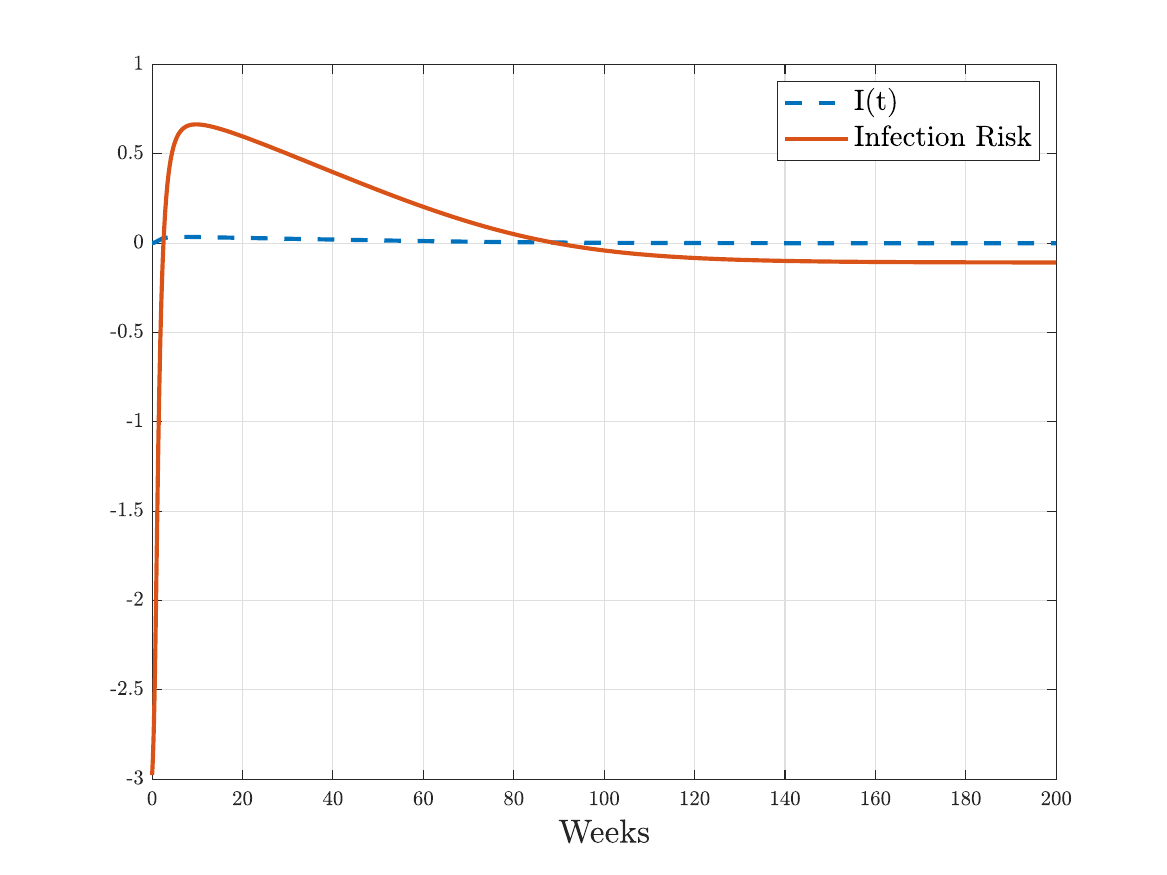}

\end{minipage}
\caption{Prevalence and infection risk with economic payoffs $q-aq^{3}$. Time path of $I(t)$ in dashed blue and $6aq-C \beta \cdot \frac{S(t)I(t)}{I(t)+S(t)}$ in solid red for $R_0^{low}=2.5$ (left) and $R_0^{high}=7.0$ (right) with flow payoffs $q-aq^{3}$. The high infection risk condition holds risk when the value of the solid red curve is positive.}\label{fig:highalpha}
\end{figure}

A higher value of $\alpha$ leads to stronger behavioral responses. When $\alpha=\frac{3}{2}$, the high infection risk condition does not hold for $R_0^{mod}$ or $R_0^{high}$, so the direct effects of policy changes will be larger than behavioral responses. The condition does come close to holding near the peak with reproductive number $R_0^{high}$, however.

When $\alpha=3$, the the high infection risk condition does hold for both $R_0^{mod}$ and $R_0^{high}$. With reproductive number $R_0^{mod}$, the solid red curve's maximum is about $0.1$, suggesting that behavioral responses to policy changes are barely larger than direct effects. With reproductive number $R_0^{mod}$, behavioral responses are substantially larger than direct effects for more than a year.

In both cases, behavioral responses are largest relative to direct effects near the peak of the infection. When $\alpha=\frac32$, this relationship is not entirely monotonic: after falling after the infection peak, the solid red curve $\frac34 a q^{-\frac12}-C \beta \cdot \frac{S(t)I(t)}{I(t)+S(t)}$ actually increases very slightly for $t$ large.

\section{Remaining Proofs}\label{sec:remproofs}

\begin{proof}[Proof of Proposition~\ref{prop:limits}]
We first show that $I(t) \rightarrow 0$. If not, we could choose $\epsilon>0$ and a sequence $t_1,t_2,\hdots$ such that $t_{i+1}> t_i+1$ and $I(t_i) > \epsilon$. Because infected individuals recover at a Poisson rate $\kappa$, there exists $\delta>0$ such that if $I(t) > \epsilon$, at least $\delta$ individuals recover in the time interval $[t,t+1]$. Thus at least $\delta$ individuals recover in period $[t_i,t_i+1]$ for each $i$, which gives a contradiction since there is a unit mass of individuals who can recover at most once each.

Since $R(t)$ is increasing, we can define $R(\infty) = \lim_{t\rightarrow \infty} R(t)$. We next show that this limit $R(\infty ) \in (0,1)$. Because $I(0)>0$ and infected individuals recover at a Poisson rate, $R(\infty)>0$.

Suppose that $R(\infty)=1$, so that $S(t) \rightarrow 0$. Because $q(t) \leq \overline{q}$ for all $t$, we can bound $R(\infty)$ above by the limit of $R(t)$ when all individuals choose activity $q(t)=\overline{q}$ for all $t$:

\begin{lemma}\label{lem:bound}
Let $\overline{R}(t)$ be the share of recovered individuals when all individuals choose level of activity $\overline{q}$ for all $t$. Then $R(\infty) \leq \overline{R}(\infty)$.
\end{lemma}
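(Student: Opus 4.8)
The plan is to reduce both dynamics to a ``final size'' relation and compare. Work with the common initial condition $R(0)=0$, $S(0)=1-I(0)$ for both the equilibrium dynamics and the hypothetical $q\equiv\overline q$ dynamics. First I would record some bookkeeping facts: since $\dot I=(q(t)^2\beta S(t)-\kappa)I(t)$ keeps $I(t)>0$ for every finite $t$, the map $t\mapsto R(t)$ is strictly increasing from $0$ to $R(\infty)$, so along the equilibrium trajectory $S$ can be viewed as a $C^1$ function of $R$ on $[0,R(\infty))$. Also $S(t)=S(0)\exp\!\big(-\beta\int_0^t q(s)^2 I(s)\,ds\big)$ and $\int_0^\infty I\,ds=R(\infty)/\kappa<\infty$, so $S(t)\ge S(0)e^{-R_0}>0$ and the limits $S(\infty)$, $R(\infty)$ exist; since $I(t)\to 0$ (already shown), $S(\infty)=1-R(\infty)$, and the same facts hold for the $q\equiv\overline q$ dynamics.

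The key computation is that along the trajectory
\[\frac{d\ln S}{dR}=\frac{\dot S/S}{\dot R}=-\frac{\beta\,q(t)^2}{\kappa}\ \ge\ -\frac{\beta\,\overline q^2}{\kappa}=-R_0,\]
because $q(t)\le\overline q$ for all $t$. Integrating over $R\in[0,R(\infty))$ yields
\[\ln(1-R(\infty))-\ln(1-I(0))\ \ge\ -R_0\,R(\infty),\]
while the identical computation with $q\equiv\overline q$ holds with equality, giving $\ln(1-\overline R(\infty))-\ln(1-I(0))=-R_0\,\overline R(\infty)$.

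To conclude I would analyze $g(r)=\ln(1-r)-\ln(1-I(0))+R_0\,r$ on $[0,1)$. Here $g(0)=-\ln(1-I(0))>0$, $g$ is strictly concave, $g'(0)=R_0-1>0$ by Assumption~\ref{assumption}, and $g(r)\to-\infty$ as $r\to 1^-$; hence $g$ increases to a maximum at $r=1-1/R_0$ and then strictly decreases to $-\infty$, so it has a unique zero $r^*\in(1-1/R_0,1)$ with $g(r)\ge 0\iff r\le r^*$. The equality above says $g(\overline R(\infty))=0$, so $\overline R(\infty)=r^*$ (using $\overline R(\infty)\in(0,1)$, which follows from the Poisson-recovery and bounded-integral arguments applied to the $q\equiv\overline q$ dynamics), and the inequality says $g(R(\infty))\ge 0$, hence $R(\infty)\le r^*=\overline R(\infty)$, which is the claim. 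I expect the only delicate point to be the passage from the time parametrization to the $R$ parametrization and the existence of the relevant limits; these follow from monotonicity of $R(t)$ together with the uniform lower bound on $S(t)$, and the remaining argument is the elementary study of $g$.
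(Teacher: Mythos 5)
Your proof is correct, but it takes a genuinely different route from the paper's. The paper argues by a monotone coupling: it defines infection chains $i_0,\dots,i_k$ and shows by induction on $k$ that the mass of individuals reachable by a chain of length at most $k$ is weakly larger when everyone plays $\overline{q}$, since meeting probabilities are increasing in activity levels; eventual infection is equivalent to membership in some chain, which gives $R(\infty)\le \overline{R}(\infty)$. You instead derive the Kermack--McKendrick final-size relation: from $\dot S=-q^2\beta SI$ and $\dot R=\kappa I$ you get $\frac{d\ln S}{dR}=-\beta q(t)^2/\kappa\ge -R_0$ pointwise (with equality for $q\equiv\overline{q}$), integrate over $R\in[0,R(\infty))$, and compare via the elementary function $g(r)=\ln(1-r)-\ln(1-I(0))+R_0 r$, whose unique zero on $[0,1)$ is exactly $\overline{R}(\infty)$. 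The reparametrization by $R$ and the existence of $S(\infty)$ are handled correctly ($\dot R=\kappa I>0$ for all finite $t$, and $S(t)\ge S(0)e^{-R_0}$), and your appeal to $I(t)\to 0$ is legitimate since the lemma sits inside the proof of Proposition~\ref{prop:limits} after that fact is established. Your approach buys several things: it is fully rigorous within the ODE model (the paper's chain argument is somewhat heuristic for a continuum population), it characterizes $\overline{R}(\infty)$ explicitly as the root of the final-size equation, and the bound $S(t)\ge S(0)e^{-R_0}>0$ already yields $R(\infty)<1$, which would shortcut the remainder of the paper's proof of Proposition~\ref{prop:limits}. What the paper's coupling buys in exchange is robustness and a stronger conclusion: it does not use the exact ODE form, and it delivers the time-$t_0$-onward comparison (at most $\overline{R}(\infty)-R(t_0)$ further infections under any feasible actions) that is later invoked in the proof of Proposition~\ref{prop:ctilde}, which your terminal-value comparison does not immediately provide.
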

\begin{proof}
Define an infection chain $i_0,\hdots,i_k$ of length $k$ to be a sequence of $k$ individuals such that $i_0$ is infected at time zero and for each $0\leq j <k$, individual $i_j$ meets individual $i_{j+1}$ while individual $i_j$ is infected. Then $R(\infty)$ is equal to the share of invidiuals who are in an infection chain of length $k$ for any $k$. So it is sufficient to show that the number of individuals in infection chain of length $k$ for any $k$ is weakly higher under level of activity $\overline{q}$ at all times compared to $q(t)$.

We claim that for each $k$, the share of individuals in infection chains of length at most $k$ is weakly higher under actions $\overline{q}$ at all times compared to $q(t)$. The proof of the claim is by induction, and when $k=0$ the share is $I(0)$ in both cases.

Suppose the claim holds for $k$. An individual $i$ who is not initially infected is in an infection chain of length $k+1$ if and only if $i$ meets at least one individual in an infection chain of length at most $k$ while that individual is infected. By the inductive hypothesis, there are weakly more individuals in an infection chain of length at most $k$ under actions $\overline{q}$ at all times compared to $q(t)$. As meetings are independent, the probability of meeting any such individual is increasing in $q_i(t)$ and $q_I(t)$ for all $t$. So the probability that $i$ is in an infection chain of length $k+1$ is weakly higher under actions $\overline{q}$ at all times compared to $q(t)$. Therefore, the claim holds for $k+1$. This completes the induction and therefore proves the lemma.
\end{proof}

By the lemma, we can assume for the remainder of the argument that $q(t)=\overline{q}$ for all $t$. We then have infection rate $S(t)I(t) \overline{q}^2\beta$ and recovery rate $\kappa I(t)$.

We claim that $\frac{S(t)}{I(t)}\rightarrow \infty$. Because $S(t) \rightarrow 0$ and $I(t)\rightarrow 0$, we can choose at time $t_0$ such that $\frac{S(t)\overline{q}^2\beta}{\kappa} \leq \frac14$ and $\frac{I(t)\overline{q}^2\beta}{\kappa} \leq \frac14$ for all $t \geq t_0$. Then for all $t \geq t_0$, we have $\dot{S}(t) \geq -\frac{\kappa}{4} S(t)$ by the first inequality and $\dot{I}(t) \leq -\frac{\kappa}{2} I(t)$ by the second inequality. In particular, $I(t)$ exponentially decays at a faster rate than $S(t)$. Since $S(t_0)$ and $I(t_0)$ are positive, this proves the claim.

Therefore, we can choose $t_0$ such that $S(t)>I(t)$ for all $t \geq t_0$ and $S(t)\overline{q}^2 \beta<\frac{\kappa}{2}$ for all $t \geq t_0$. Given an individual $i$ who is infected at time $t \geq t_0$, the second inequality implies that the expected number of susceptible individuals whom $i$ infects is less than $\frac12$. Therefore, the total number of individuals who become infected at times $t \geq t_0$ is less than $$I(t_0)\sum_{k=1}^{\infty} \frac{1}{2^k} =I(t_0) < S(t_0).$$
This contradicts our assumption that $S(t) \rightarrow 0$, so we must have $R(\infty) < 1$.
\end{proof}

\begin{proof}[Proof of Proposition~\ref{prop:singlepeaked}]
We first show that $I(t)$ is initially increasing for $I(0)$ small. We then show that once $\dot{I}(t)=0$, we have $\dot{I}(t) \leq 0$ at all subsequent times until $t^*$. To do so we show that whenever $\dot{I}(t)=0$ at some time $t < t^*$, we have $\ddot{I}(t)<0$ at this time. This step uses the assumption $S(t)\geq I(t)$.

Note that the population shares $(S(t),I(t),R(t))$ and the equilibrium level of activity $q(t)$ are continuous in time, so $\dot{I}(t)$ is continuous as well.

We first show that for $I(0)$ sufficiently small, $\dot{I}(0)>0$. By Assumption \ref{assumption} (high enough transmissivity), $\overline{q}^2 \beta>\kappa$. Because $q(0)$ is a continuous function of $(S(t),I(t),R(t))$ with $q(0)=\overline{q}$ when $I(t)=0$, it follows that $$(1-I(0))q(0)^2 \beta > \kappa$$
for $I(0)>0$ sufficiently small. Since $\dot{I}(0)=I(0)(1-I(0))q(0)^2 \beta -I(0)\kappa$, the prevalence is initially increasing for $I(0)$ sufficiently small.

Let $t^{peak}$ be the first time at which $\dot{I}(t)=0$.\footnote{If there is no such $t < t^*$, then $t^*$ is the peak.} We will show that $\dot{I}(t) < 0$ on $(t^{peak},t^*)$ since we have assumed that $S(t)>I(t)$ on $(t^{peak},t^*)$.

We first claim that $\ddot{I}(t^{peak})<0$. Let $$(S^x(t),I^x(t),R^x(t))=(S(t)-x,I(t),R(t)+x)$$
and let $q^x(t)$ be the corresponding equilibrium action. The infection rate is $\iota^x(t) = q^x(t)^2S^x(t)I^x(t)\beta$, and from equation (\ref{eq:activity}) $$q^x(t)=\frac{1}{2a+C\beta \cdot  \frac{I^x(t)S^x(t)}{S^x(t)+I^x(t)}} \Rightarrow \left. \frac{\partial q^x(t)}{\partial x}\right|_{x=0}= \frac{1}{(2a+C\beta \cdot  \frac{I(t)S(t)}{S(t)+I(t)})^2}\cdot \frac{I(t)^2}{(S(t)+I(t))^2}.$$ So the derivative of the infection rate in $x$ at zero is
\begin{align*}\left. \frac{\partial \iota^x(t)}{\partial x}\right|_{x=0} &= q(t)\beta I(t) \left(-q(t)+2\left.\frac{\partial q^x(t)}{\partial x}\right|_{x=0}S(t)\right)
\\&=- \frac{q (t)\beta I(t)}{(2a+C \beta \cdot \frac{S(t)I(t)}{S(t)+I(t)})^2}\left(\left(2a + C \beta \cdot \frac{S(t)I(t)(S(t)+I(t))}{(S(t)+I(t))^2}\right)-2C\beta \cdot \frac{S(t)I(t)^2}{(S(t)+I(t))^2}\right)
\\&= -\frac{q (t)\beta I(t)}{(2a+C \beta \cdot \frac{S(t)I(t)}{S(t)+I(t)})^2}\left(2a + C \beta \cdot \frac{S(t)}{(S(t)+I(t))^2}\cdot (S(t)-I(t))\right).\end{align*}
We will need that the right-hand side is negative, which holds whenever $S(t)>I(t)$.

At the peak, we have $\dot{I}(t^{peak})=0$ while $\dot{S}(t^{peak})=-\dot{R}(t^{peak})=-\iota(t^{peak}).$ Since we have shown $\left. \frac{\partial \iota^x(t)}{\partial x}\right|_{x=0}<0$, this implies that the derivative of the infection rate $\dot{\iota}(t^{peak})<0$. Because the time derivative of the recovery rate is $\kappa\dot{I}(t^{peak})=0$, the claim $\ddot{I}(t^{peak})<0$ holds.

Similarly, suppose that $\dot{I}(t)=0$ for some $t \in (t^{peak},t^*)$. The same argument shows that $\ddot{I}(t)<0$ at time $t$. So $I(t)$ is strictly decreasing on $[t^{peak}, t^*]$.
\end{proof}

\begin{proof}[Proof of Corollary~\ref{cor:singlepeaked}]
We treat the two cases of $C\beta$ small and $C\beta$ large separately.

\textbf{$C\beta$ small}: By the proof of Proposition~\ref{prop:singlepeaked}, $I(t)$ is single-peaked if $I(0)$ is sufficiently small and $$2a + C \beta \cdot \frac{S(t)}{(S(t)+I(t))^2}\cdot (S(t)-I(t))>0$$
for all $t$.

We claim the expression $-\frac{S(t)I(t)}{(S(t)+I(t))^2}$ is bounded below by a constant $M$ independent of $C$ and $\beta$ (but potentially depending on $a$ and $\kappa$). Indeed, $S(t)I(t)$ is bounded above by one. The expression $S(t)+I(t)$ is bounded below by the value of $1-R(\infty)$ when $\beta=1$ and $C=0$, so that $q(t)=\overline{q}$ for all $t$ (Lemma~\ref{lem:bound}). Since $1-R(\infty)>0$ by Proposition~\ref{prop:limits}, this proves the claim.

We conclude that whenever $C\beta < \frac{2a}{M},$ the prevalence $I(t)$ is single-peaked for $I(0)$ sufficiently small.

\textbf{$C\beta$ large}:  We next consider $C\beta$ large, and will show that for any $C\beta$ sufficiently large, we have $I(t)<S(t)$ for all $t$ when $I(0)$ is small enough. As above, we can bound $1-R(\infty)$ below by a positive constant independent of $C$ and $\beta$, which we call $N$.

Let $\delta<N/2$. We will show that for $C\beta$ sufficiently large and $I(0)<\delta/2$, we have $I(t) < \delta$ for all $t$. If not, there must exist some time $t_0$ at which $\delta/2<I(t_0)<\delta$ and $\dot{I}(t_0)>0$.

Since $I(t_0)+S(t_0)\geq N$, $I(t_0)<\delta$, and $\delta<N/2$, we have $S(t_0) >I(t_0)$. Therefore, $\frac{I(t_0)S(t_0)}{S(t_0)+I(t_0)}\geq \frac{I(t_0)}{2} \geq \frac{\delta}{4}$. Recall that $q(t)=\frac{1}{2a+C\beta\cdot\frac{I(t)S(t)}{S(t)+I(t)}}$. So taking $C \beta$ sufficiently large, we can assume that $q(t_0)<\sqrt{\kappa}$.  Then $$\dot{I}(t_0) = I(t_0) \left(\frac{S(t_0)}{I(t_0)+S(t_0)} q(t_0)^2\beta - \kappa\right) \leq  I(t_0)( q(t_0)^2-\kappa)< 0.$$ But this contradicts our assumption that $\dot{I}(t_0)>0$.

We have shown $I(t)<\delta$ for all $t$. Since  $I(t)+S(t)\geq N > 2\delta$ for all $t$, we must have $I(t)<S(t)$ for all $t$ as desired. We now apply Proposition~\ref{prop:singlepeaked}.
\end{proof}

\begin{proof}[Proof of Proposition~\ref{prop:het}]
Suppose types in $A$ choose $q_k(t)>0$ at equilibrium while the remaining types choose $q_k(t)=0$. For each type $k$ who chooses $q_k(t)>0$, we have a first-order condition
$$2aq_k(t) + C_k \left(\sum_{k'\in A} q_{k'}(t) I_{k'}(t)\right) \beta \frac{S_k(t)}{S_k(t)+I_k(t)}=1.$$
Recalling that $M_A = \left(C_k\beta \gamma_{kk'}\cdot\frac{S_k(t)I_{k'}(t)}{I_k(t)+S_k(t)}\right)_{ k,k' \in A}$, we can rewrite the previous equation as
\begin{equation}\label{eq:hetchar}\left(M_A + 2 \cdot \text{diag}(a_k) \right) (q_k(t))_{k \in A}=  \textbf{1}.\end{equation}
This gives the desired expression for equilibrium behavior.

To complete the proof, we must show that $A=A_m$. At least one entry of $\left(M_A + 2 a I \right)^{-1} \textbf{1}$ must be positive for each $A$. Therefore, at most $m-1$ types can be removed from some $A_j$ with $1\leq j\leq m-1$, so we must have $A_j=A_{j+1}$ beginning with some $1 \leq j \leq m$.

The solution to equation~(\ref{eq:hetchar}) for each $A$ can be seen as the equilibrium of a game for types in $A$ in which levels of activity can be negative and payoffs for each type $k$ are $$q_k(t) - a_kq_k(t)^2-C_k\beta q_k(t) \cdot \frac{S_k(t)}{S_k(t)+I_k(t)} \cdot\sum_{k' \in A} q_{k'}(t) I_{k'}(t).$$ The best response for any individual $i$ at time $t$ is decreasing in $\sum_{k =1}^m I_k(t) q_k(t)$. Each time we pass from $A_j$ to $A_{j+1}$, we increase the actions $q_k(t)$ of all types choosing negative levels of activity according to equation~(\ref{eq:hetchar}) to zero. When the actions of these types are increased to zero, other types will decrease their activity levels. However $\sum_{k =1}^m I_k(t) q_k(t)$ cannot be lower at equilibrium under $A_{j+1}$ than $A_j$ (because then all types would choose higher activity levels under $A_{j+1}$ than $A_j$, which would increase $\sum_{k =1}^m I_k(t) q_k(t)$). Therefore any type choosing zero activity level under $A_j$ for $j<m$ will also choose zero activity level under $A_m$.

The same logic shows that the equilibrium is unique: because the best responses of all types are decreasing in $\sum_{k =1}^m I_k(t) q_k(t)$, there is a unique fixed point.
\end{proof}

\begin{proof}[Proof of Proposition~\ref{prop:twotype}]
We first prove the characterization of when individuals of type $2$ choose $q_2(t)=0.$ This occurs if and only if
$$C_2 \beta q_1(t) I_1(t) \cdot \frac{S_2(t)}{S_2(t)+I_2(t)} \geq 1.$$
If $q_2(t)=0$, then $q_1(t)$ is determined by equation (\ref{eq:activity}) as in the homogeneous case, i.e. $q_1(t) = \frac{1}{2a+C_1 \beta \cdot \frac{I_1(t)S_1(t)}{I_1(t)+S_1(t)}}$. Substituting gives the expression in the proposition.

Next, we show that $I_1(t_1)>I_1(t_2)$ whenever type $2$ stops choosing positive activity at time $t_1$ and resumes at time $t_2$. Suppose not, so that $I_1(t_1)\leq I_1(t_2)$.

We have $$C_2 \beta q_1(t) I_1(t) \cdot \frac{S_2(t)}{S_2(t)+I_2(t)} = 1$$
at times $t_1$ and $t_2$. We must have $\frac{S_2(t_1)}{S_2(t_1)+I_2(t_1)} < \frac{S_2(t_2)}{S_2(t_2)+I_2(t_2)}$, as a positive fraction of infected type $2$ individuals recover while no new type $2$ individuals are infected during the interval $[t_1,t_2]$. Therefore, we must have $q_1(t_1)I_1(t_1) > q_1(t_2)I_1(t_2).$ Since $I_1(t_1)\leq I_1(t_2)$ while $S_1(t_1) > S_1(t_2)$, we have $\frac{S_1(t_1)}{I_1(t_1)+S_1(t_1)} > \frac{S_1(t_2)}{I_1(t_2)+S_1(t_2)}$. Since $q_1(t)I_1(t)$ decreases and the probability of a type $1$ individual being susceptible decreases from $t_1$ to $t_2$, we have $q_1(t_1) < q_1(t_2)$.

We have shown that if $I_1(t_1)\leq I_1(t_2)$, then $q_1(t_1) < q_1(t_2)$ but $q_1(t_1)I_1(t_1) > q_1(t_2)I_1(t_2).$ This gives a contradiction.
\end{proof} 

\begin{proof}[Proof of Theorem~\ref{thm:prevalenceIR}$''$]
Let $q^x(t)$ be the equilibrium action given $(S^x(t),I^x(t),R^x(t))$. As $x$ varies, $S^x(t)+I^x(t)$ remains constant while $\left.\frac{\partial \left(S^x(t)I^x(t)\right)}{\partial x}\right|_{x=0}=S(t)-I(t)$.

The infection rate is $\iota^x(t) = q^x(t)^2S^x(t)I^x(t)\beta$, and we want to compute the derivative of the infection rate in $x$. Differentiating, $$\left. \frac{\partial \iota^x(t) }{\partial x}\right|_{x=0}= q(t)^2 (S(t)-I(t))\beta+ 2q(t)\left(\left. \frac{\partial q^x(t)}{\partial x}\right|_{x=0}\right) S(t)I(t) \beta.$$
The right-hand side has the same sign as $q(t)(S(t)-I(t)) + 2 \left(\left. \frac{\partial q^x(t)}{\partial x}\right|_{x=0}\right) S(t)I(t).$

Equation~(\ref{eq:focgen}), which applies at equilibrium since $q(t)>0$, is now:
$$f'\left(q^x(t)\right) - C \beta q^x(t)  \cdot \frac{I^x(t)S^x(t)}{S^x(t)+I^x(t)} = 0.$$ 
Differentiating this expression in $x$,
$$\left(\left. \frac{\partial q^x(t)}{\partial x}\right|_{x=0}\right) = \frac{ C \beta q(t)\cdot  \frac{S(t)-I(t)}{S(t)+I(t)}}{f''(q(t))-C\beta \cdot  \frac{I(t)S(t)}{S(t)+I(t)}}.$$
Substituting shows that $q(t)(S(t)-I(t)) + 2 \left(\left.\frac{\partial q^x(t) }{\partial x}\right|_{x=0}\right)  S(t)I(t)$ is equal to
\begin{equation}\label{eq:exptosign} \frac{(S(t)-I(t))q(t)}{f''(q(t))-C\beta \cdot  \frac{I(t)S(t)}{S(t)+I(t)}}  \cdot \left(  C \beta \cdot \frac{S(t)I(t)}{I(t)+S(t)}+f''(q(t)) \right) .\end{equation}
Suppose $I(t) < S(t)$. The denominator is negative because $f$ is concave. So expression~(\ref{eq:exptosign}) is negative if and only if $$C \beta \cdot \frac{S(t)I(t)}{I(t)+S(t)}> -f''(q(t)),$$ which is the definition of high infection risk at $t$.
\end{proof}

\begin{proof}[Proof of Theorem~\ref{thm:policy}$''$]
We will again use our formula for the infection rate $\iota(t) = q(t)^2S(t)I(t)\beta$ and the characterization of $q(t)$ from equation~(\ref{eq:focgen}), which applies at equilibrium since $q(t)>0$.

(1) $\Leftrightarrow$ (2): Differentiating the infection rate, $$ \frac{\partial \iota(t) }{\partial \beta}= q(t)^2 S(t)I(t)+ 2q(t)\frac{\partial q(t) }{\partial \beta} S(t)I(t) \beta.$$
The right-hand side has the same sign as $q(t) + 2 \frac{\partial q(t) }{\partial \beta} \beta.$ Differentiating equation~(\ref{eq:focgen}),
$$\frac{\partial q(t) }{\partial \beta} = \frac{ C  q(t)\cdot  \frac{I(t)S(t)}{S(t)+I(t)}}{f''(q(t))-C\beta \cdot  \frac{I(t)S(t)}{S(t)+I(t)}}.$$
Substituting, 
\begin{equation}\label{eq:exptosignpolicy}q(t) + 2 \frac{\partial q(t) }{\partial \beta} \beta = \frac{q(t)}{f''(q(t))-C\beta \cdot  \frac{I(t)S(t)}{S(t)+I(t)}}  \cdot \left(  C \beta \cdot \frac{S(t)I(t)}{I(t)+S(t)}+f''(q(t)) \right) .\end{equation}
The denominator is negative because $f$ is concave. So expression~(\ref{eq:exptosignpolicy}) is negative if and only if $$C \beta \cdot \frac{S(t)I(t)}{I(t)+S(t)}> -f''(q(t)),$$ which is the definition of high infection risk at $t$.

(2) $\Leftrightarrow$ (3): Because $C$ is constant, the infection rate $\iota(t)$ is decreasing in $\beta$ if and only if $C\iota(t)$ is decreasing in $\beta$. In our expression for $C\iota(t)$, the terms $C$ and $\beta$ only appear within the product $C\beta$. So $C\iota(t)$ is decreasing in $\beta$ if and only if it is decreasing in $C$.
\end{proof}

\begin{proof}[Proof of Proposition~\ref{prop:ctilde}]
(i) We first show $\widetilde{C}(t) < C$ for all $t$. Recall $\widetilde{C}(t) = C - \pi_S(t) + \pi_I(t)$, where $\pi_S(t)$ and $\pi_I(t)$ are the equilibrium continuation payoffs for infected and susceptible individuals. We must show that $\pi_I(t) > \pi_S(t)$.

Susceptible and infected individuals choose action $q(t)<\overline{q}$ while recovered individuals choose action $\overline{q}$. The time at which an individual who is susceptible at time $t$ recovers (which may be $\infty$) first-order stochastically dominates the time at which an individual who is infected at time $t$ recovers. Therefore, the expected flow payoffs from activity are higher for an infected individual than a susceptible individual. Since only susceptible individuals may pay the infection cost $C$, we conclude $\pi_I(t) > \pi_S(t).$

We next show that $(1-\frac{\overline{R}(\infty) - R(t_0)}{1-R(t_0)})C \leq \widetilde{C}(t)$. Recall $\overline{R}(\infty)$ is the share of individuals who are eventually infected when $q(t)=\overline{q}$ for all $t$.

Consider an individual $i$ who is susceptible at time $t_0$. We claim that if $i$ chose $q_i(t)=\overline{q}$ for all $t\geq t_0$, then the probability that $i$ is infected at any time $t \geq t_0$ is at most $\overline{R}(\infty) - R(t)$.

By Lemma~\ref{lem:bound}, the number of individuals are infected at any time $t\geq t_0$ is at most $\overline{R}(\infty)-R(t)$ given any actions $q(t) \leq \overline{q}$. Moreover, the number of individuals are infected at any time $t\geq t_0$ is maximized (conditional on actions before time $t_0$) when $q(t) = \overline{q}$ for all $t \geq t_0$. Since increasing the activity level $q(t)$ chosen by others increases the probability that any given infected individual will infect $i$, this implies the claim.

Now, consider an individual $i$ who is either susceptible (with probability $\frac{S(t)}{S(t)+I(t)}$) or infected (with probability  $\frac{I(t)}{S(t)+I(t)}$) at time $t_0$. Individual $i$ could deviate to choose the same distribution of actions as an individual who is infected at time $t_0$. To do so, individual $i$ initially follows the equilibrium but then transitions to choosing activity level  $q(t)=\overline{q}$ at Poisson rate $\kappa$.

This deviation would increase the continuation payoffs for $i$ conditional on being infected at time $t_0$, and therefore must decrease the continuation payoffs for $i$ conditional on being susceptible at time $t_0$. If individual $i$ is not already infected, then $i$ would pay the infection cost $C$ at some future date with probability at most $\frac{\overline{R}(\infty) - R(t_0)}{1-R(t_0)}$. Thus, the equilibrium continuation payoffs $\pi_S(t_0)$ must be at least $\pi_I(t_0)-\frac{\overline{R}(\infty) - R(t_0)}{1-R(t_0)}\cdot C$. Therefore $$\widetilde{C}(t_0)=C-(\pi_I(t_0)-\pi_S(t_0)) \geq \left(1-\frac{\overline{R}(\infty) - R(t_0)}{1-R(t_0)}\right)C.$$

(ii) We want to bound $\pi_I(t) - \pi_S(t)$ for $t$ large. This gap consists of the difference in flow payoffs from economic activity and the potential infection cost.

We have $I(t) \rightarrow 0$ as in the myopic case. By part (i) of the proposition, $\widetilde{C}(t) \leq C$ for all $C$. So by equation~(\ref{eq:actfl}), we have $q(t) \rightarrow \infty$ as $t \rightarrow \infty$. Therefore the gap between flow payoffs from economic activity vanishes as $t\rightarrow \infty$.

The probability of infection at some time after $t$ is $R(\infty) - R(t)$, and this probability converges to zero as $t\rightarrow \infty$ since $R(t)$ is increasing. So the potential infection cost for susceptible individuals vanishes as $t\rightarrow \infty$. We conclude that $\pi_I(t) - \pi_S(t)\rightarrow 0.$

(iii) Let $\epsilon>0$. We claim that $\pi_I(0) - \pi_S(0) < \epsilon$ for $I(0)$ sufficiently small. Given a discount rate $r$, we can choose $t_0$ large enough so that for any $I(0)$ the contribution to $\pi_I(0)$ from payoffs at time $t\geq t_0$ is at most $\epsilon/3$.

We can then choose $I(0)$ small enough so that $C(I(t_0)+R(t_0)) \leq \epsilon/3$. To see this, note that $I(t_0)+R(t_0)$ is bounded above by its value when $q(t)=\overline{q}$ for all $t$.

Finally, by equation~(\ref{eq:actfl}), we observe that $\max_{0\leq t \leq t_0} q(t) \rightarrow \overline{q}$ as $I(0) \rightarrow 0$. Therefore, we conclude that the gap between flow payoffs from economic activity at times $t \leq t_0$ vanishes as $I(0) \rightarrow 0$. So for $I(0)$ sufficiently small, the total gap in flow payoffs at times $t \leq t_0$ is at most $\epsilon/3$. Combining these three inequalities, we conclude that $\pi_I(0) - \pi_S(0) < \epsilon$.
\end{proof}

\begin{proof}[Proof of Theorem~\ref{thm:prevalenceIR}$'''$]
The proof is the same as the proof of Theorem~\ref{thm:prevalenceIR}, with $C$ replaced by $\widetilde{C}(t)$.
\end{proof}

\begin{proof}[Proof of Theorem~\ref{thm:policy}$'''$]

(1) $\Leftrightarrow$ (2): The proof is the same as the proof of Theorem~\ref{thm:policy}, with $C$ replaced by $\widetilde{C}(t)$.

(2) $\Leftrightarrow$ (3): The infection rate $\iota(t)$ is decreasing in the transmission rate $\beta$ at time $t$ if and only if $\widetilde{C}(t)\iota(t)$ is decreasing in $\beta$. Expanding the product $\widetilde{C}(t)\iota(t)$, the terms $\widetilde{C}(t)$ and $\beta$ only appear within the product $\widetilde{C}(t)\beta$. So $\widetilde{C}(t)\iota(t)$ is decreasing in $\beta$ if and only if it is decreasing in $\widetilde{C}(t)$.

An instantaneous shock to $C$ at time $t$ affects $\widetilde{C}(t)=C+\pi_I(t)-\pi_S(t)$ only via the first term and does not change the difference in continuation payoffs $\pi_I(t)-\pi_S(t)$. So $\widetilde{C}(t)\iota(t)$ is decreasing in $\widetilde{C}(t)$ if and only if it is decreasing in $C$.
\end{proof}

\end{spacing}
\end{document}